\documentclass{article} 
\usepackage{conf2027_conference,times}

\usepackage[T1]{fontenc}
\usepackage{amsmath,amssymb,amsthm,mathtools}
\usepackage{graphicx,booktabs,tabularx,array,longtable,wrapfig,capt-of}
\usepackage{tikz}
\usetikzlibrary{arrows.meta,positioning,fit,backgrounds,calc}
\usepackage{microtype}
\usepackage{float}
\usepackage[hidelinks]{hyperref}
\usepackage{url}
\usepackage[normalem]{ulem}
\usepackage{amsmath}
\usepackage{enumitem}
\usepackage[ruled,vlined,linesnumbered]{algorithm2e}
\DontPrintSemicolon
\SetAlgoNlRelativeSize{-1}

\newtheorem{theorem}{Theorem}
\newtheorem{proposition}{Proposition}
\newtheorem{lemma}{Lemma}
\newtheorem{corollary}{Corollary}
\theoremstyle{definition}
\newtheorem{assumption}{Assumption}

\theoremstyle{remark}

\theoremstyle{plain}
\newcommand{\R}{\mathbb{R}}

\newcommand{\cA}{\mathcal{A}}
\newcommand{\cG}{\mathcal{G}}
\newcommand{\cR}{\mathcal{R}}
\newcommand{\cE}{\mathcal{E}}

\newcommand{\alg}{\textup{\textsc{QARS}}}

\DeclareMathOperator{\tr}{tr}

\AtBeginDocument{%
  \setlength{\abovedisplayskip}{4pt plus 2pt minus 3pt}%
  \setlength{\belowdisplayskip}{4pt plus 2pt minus 3pt}%
  \setlength{\abovedisplayshortskip}{0pt plus 2pt}%
  \setlength{\belowdisplayshortskip}{3pt plus 2pt minus 2pt}%
  \setlength{\jot}{2pt}%
  \setlength{\textfloatsep}{12pt plus 2pt minus 4pt}%
  \setlength{\floatsep}{10pt plus 2pt minus 3pt}%
  \setlength{\intextsep}{10pt plus 2pt minus 3pt}%
  \setlength{\abovecaptionskip}{6pt}%
  \setlength{\belowcaptionskip}{0pt}%
}
\makeatletter
\def\thm@space@setup{\thm@preskip=3pt plus1pt minus2pt \thm@postskip=\thm@preskip}
\makeatother

\title{Quality-Aware Cross-Model\\ Computation Reuse}
\author{
Jin Cheng$^{1}$, Xiangxiang Dai$^{1}$, Maoli Liu$^{1}$,\\
\textbf{Ziyi Han$^{1}$, Zhuohua Li$^{2}$, John C.~S.~Lui$^{1}$}\\[0.4em]
$^{1}$The Chinese University of Hong Kong\\
$^{2}$Xi'an University of Electronic Science and Technology
}
\conffinalcopy

\begin{document}
\maketitle
\pagestyle{plain}
\thispagestyle{plain}

\begin{abstract}
An intermediate result computed by one model can be reused by other models to perform their tasks. Existing work mainly focuses on practical execution, leaving a theoretical gap in optimizing reuse decisions. This optimization faces two challenges: \emph{quality uncertainty}, because the effect of reuse on task quality is uncertain across models, and \emph{coupled scheduling}, because tasks need to share the cost of preparing reusable results. These challenges compound each other: quality must be learned online, but the shared preparation structure makes scheduling NP-hard even with known quality, breaking the key assumption in existing methods.

We formulate cross-model computation reuse as an online decision problem and develop the Quality-Aware Reuse Scheduling (\alg{}) algorithm to address it. For \emph{quality uncertainty}, \alg{} learns task-dependent reuse quality from selected, possibly delayed feedback and uses optimistic estimates to guide decisions. For \emph{coupled scheduling}, it jointly chooses which results to prepare and which tasks should use them, adapting scheduling accuracy to the remaining quality uncertainty. For the considered problem, our analysis separates learning and optimization error in regret and quantifies the tradeoff between scheduling accuracy and computation. Completing the quality-aware stopping rule yields $\widetilde O(\sqrt{T})$ regret while preserving feasibility.  Experiments demonstrate the effectiveness of \alg{} in optimizing cross-model reuse, reducing the combined cost of computation and quality loss by up to 18.0\%, and mean regret by 63.9\% over the strongest scheduling baseline.
\end{abstract}

\section{Introduction}\label{sec:intro}
\suppressfloats[t]
Large language models (LLMs) often process the same input for many tasks, such as several questions about one document or successive turns of a conversation. Such repeated processing is common in serving workloads and adds latency and cost \citep{wang2025kvcachewild,ye2025kvcomm}. Computation reuse avoids it by computing a result once and sharing it across tasks, for example by caching the attention states of a shared prefix \citep{gim2024promptcache,zheng2024sglang,yao2025cacheblend}. This reuse typically assumes one model both produces and consumes the result. However, LLM workflows increasingly combine multiple models \citep{zhang2025maas}, so one model's result can also serve another.

\emph{Cross-model computation reuse} targets this case: a \emph{source} model computes a \emph{reusable result}, and a \emph{target} model uses it for multiple tasks (Figure~\ref{fig:reuse-motivation}). For example, a small source model can summarize or compress a long document \citep{jiang2023llmlingua,xu2024recomp}, and a larger target model then answers several questions from this result instead of the full document. Similarly, KV caches can be shared or converted across models \citep{liu2026droidspeak,fu2026cachetocache}. Such reuse avoids repeated processing, but the target model sees a condensed or converted input, which may reduce task quality.

\begin{wrapfigure}{r}{0.37\textwidth}
\vspace{-0.3cm}
\centering
\includegraphics[width=\linewidth]{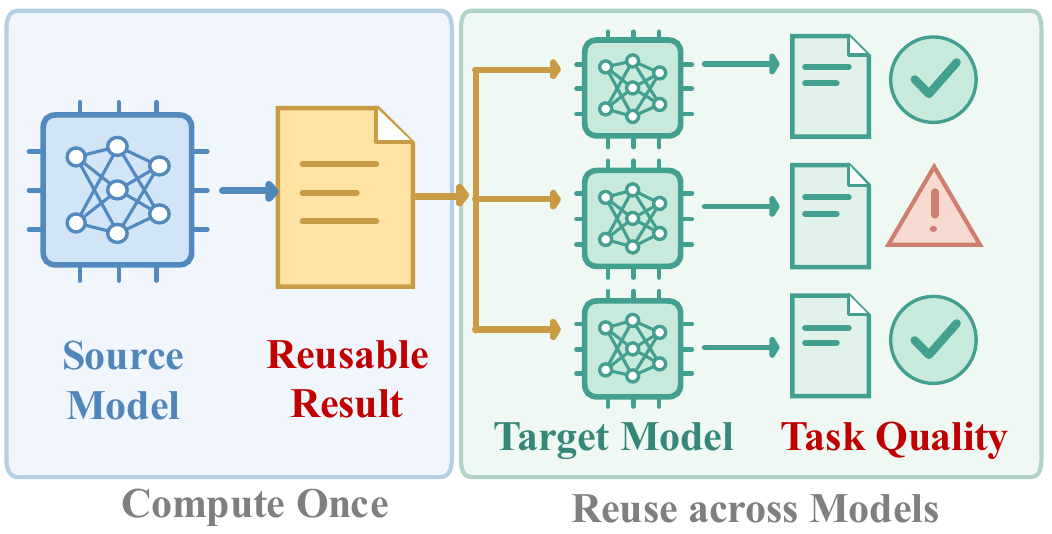}
\caption{Cross-model reuse.}
\label{fig:reuse-motivation}
\vspace{-0.9cm}
\end{wrapfigure}

Existing reuse work mainly focuses on practical execution, leaving a theoretical gap in optimizing reuse decisions. Prior studies develop mechanisms for cross-model reuse \citep{liu2026droidspeak,fu2026cachetocache} and model--cache selection \citep{lu2026gateway}. Building on these systems, we study the reuse scheduling problem: \emph{\uline{which results should be prepared for reuse, and how should tasks be assigned to them when task-dependent quality is unknown and preparation costs are shared?}} Optimizing these decisions faces two coupled challenges.

\emph{\uline{Challenge I: Quality uncertainty}} arises because the effect of reuse on task quality is uncertain across models. Reuse quality depends on the source and target models, the task, and how the result is reused \citep{liu2026droidspeak,fu2026cachetocache,heo2026crossmodel}. In Figure~\ref{fig:reuse-motivation}, a summary that the target model can read may still lack the evidence needed to answer a question. Quality feedback is available only for executed assignments and may be delayed. The scheduler must therefore learn task-dependent reuse quality from selected feedback while controlling quality loss and execution cost.

\emph{\uline{Challenge II: Coupled scheduling}} arises because tasks share the cost of preparing reusable results, coupling their reuse decisions. The scheduler must therefore jointly choose which results to prepare and how to assign tasks to them, accounting for shared preparation costs, task-dependent execution costs, and quality losses.

We formulate cross-model computation reuse as an online decision problem and develop \emph{Quality-Aware Reuse Scheduling} (\alg{}) to address it. For Challenge I, \alg{} learns task-dependent reuse quality from selected, possibly delayed feedback and uses optimistic estimates to guide reuse decisions. For Challenge II, its Joint Adaptive Scheduling (\textsc{JAS}) algorithm jointly chooses which results to prepare and how to assign tasks to them. The scheduler uses a computable optimization gap to adjust scheduling accuracy to the remaining quality uncertainty.

Our main contributions are as follows:
\begin{itemize}[leftmargin=*]
\item \textbf{Problem formulation.} We provide a theoretical framework for cross-model computation reuse. We formulate it as an online decision problem, capturing quality uncertainty and coupled scheduling. We show that the shared preparation structure makes scheduling NP-hard even with known quality, breaking the efficient-oracle assumption in existing methods (Section~\ref{sec:model}).

\item \textbf{Joint scheduling with adaptive accuracy.} We develop \alg{}, which learns reuse quality from selected, possibly delayed feedback and uses \textsc{JAS} for joint preparation and assignment. Unlike existing methods, \textsc{JAS} bounds each feasible schedule's optimization error by a computable gap and stops once this gap matches the remaining quality uncertainty (Section~\ref{sec:online}).

\item \textbf{Theoretical guarantees.} We separate learning and optimization error, obtaining $\widetilde O(\sqrt{T})$ regret when quality-aware stopping is met at every epoch. Under the component conditions, we bound the scheduling work needed to reach a positive stopping tolerance (Section~\ref{sec:theory}).

\item \textbf{Empirical validation.} Experiments on four datasets support the analysis, with \alg{} reducing the combined cost of computation and quality loss by up to 18.0\%, and mean regret by 63.9\% over the strongest scheduling baseline (Section~\ref{sec:experiments}).
\end{itemize}
\section{Related Work}\label{sec:related}
\subsection{Computation Reuse}\label{sec:related-reuse}
Existing work on computation reuse mainly focuses on practical execution. Within a single model, Prompt Cache and SGLang reuse attention states across prompts \citep{gim2024promptcache,zheng2024sglang}, while CacheBlend and KVCOMM extend reuse to document segments and overlapping agent contexts \citep{yao2025cacheblend,ye2025kvcomm}. Across models, prompt compression and summarization let a small model shorten the input of a larger one \citep{jiang2023llmlingua,xu2024recomp}. At the cache level, DroidSpeak selectively recomputes parts of a cache shared across fine-tuned variants, and Cache-to-Cache learns to transform and combine source and target caches \citep{liu2026droidspeak,fu2026cachetocache}. These studies show how to produce and use reusable results, but not which results to prepare or which tasks should use them. We address this gap by studying joint result preparation and task assignment under unknown task-dependent quality and shared preparation costs, with guarantees on effectiveness and efficiency.

\subsection{Online Learning}\label{sec:related-learning}
Online learning addresses sequential decision-making under uncertainty by learning from observed feedback. Linear bandit methods use optimistic confidence bounds to guide exploration \citep{abbasi2011linear}. Contextual linear semi-bandits extend this principle to feasible combinations \citep{wen2015comblin,takemura2021contextual}, while delayed-feedback analyses account for pending observations \citep{howson2023qars,zierahn2025delayed}. AESCB lets the index approximation error scale with the confidence widths \citep{cuvelier2021aescb}. Other methods exploit matroid structure \citep{tzeng2024matroid} or reduce oracle calls \citep{kim2025oracle}. These analyses rely on a subroutine with an a priori guarantee over the action set. In our problem, shared preparation costs make even the optimistic problem NP-hard, so a polynomial-time exact oracle cannot be assumed unless $\mathrm{P}=\mathrm{NP}$. We instead certify each schedule by a computable additive gap to a global lower bound, stop once this gap matches the selected assignments' widths, and bound the work needed to reach it under the component conditions. We provide more detailed mathematical distinctions in Appendix \ref{app:directional-proof}.

\subsection{Coupled Scheduling}\label{sec:related-scheduling}
Coupled scheduling addresses interdependent decisions when tasks share preparation costs or compete for resources. Facility-location models capture shared opening costs and task-level assignment costs \citep{pasteris2021facility}. OSTOR uses primal--dual decomposition and adaptive dual descent to schedule reusable queries with approximation guarantees \citep{cheng2025ostor}. In LLM serving, LMetric combines cache reuse with load balancing \citep{zhang2026lmetric}, RouteLLM learns quality--cost tradeoffs for model selection \citep{ong2024routellm}, and Unified AI Gateway jointly formulates model, cache and placement decisions under quality and resource constraints \citep{lu2026gateway}. Our focus is on joint result preparation and task assignment with shared preparation costs and unknown task-dependent quality. The key distinction is that quality is learned from selected, possibly delayed feedback, and computable global gaps adapt scheduling accuracy to the remaining quality uncertainty while preserving resource feasibility.


\section{System Model and Problem Formulation}\label{sec:model}
In this section, we present the system model and formulate the optimization problem for cross-model computation reuse. Table~\ref{tab:notation} in Appendix~\ref{app:notation} summarizes the key notation.

\subsection{System Model}\label{sec:system-model}
\begin{wrapfigure}{r}{0.50\textwidth}
\centering
\includegraphics[width=\linewidth]{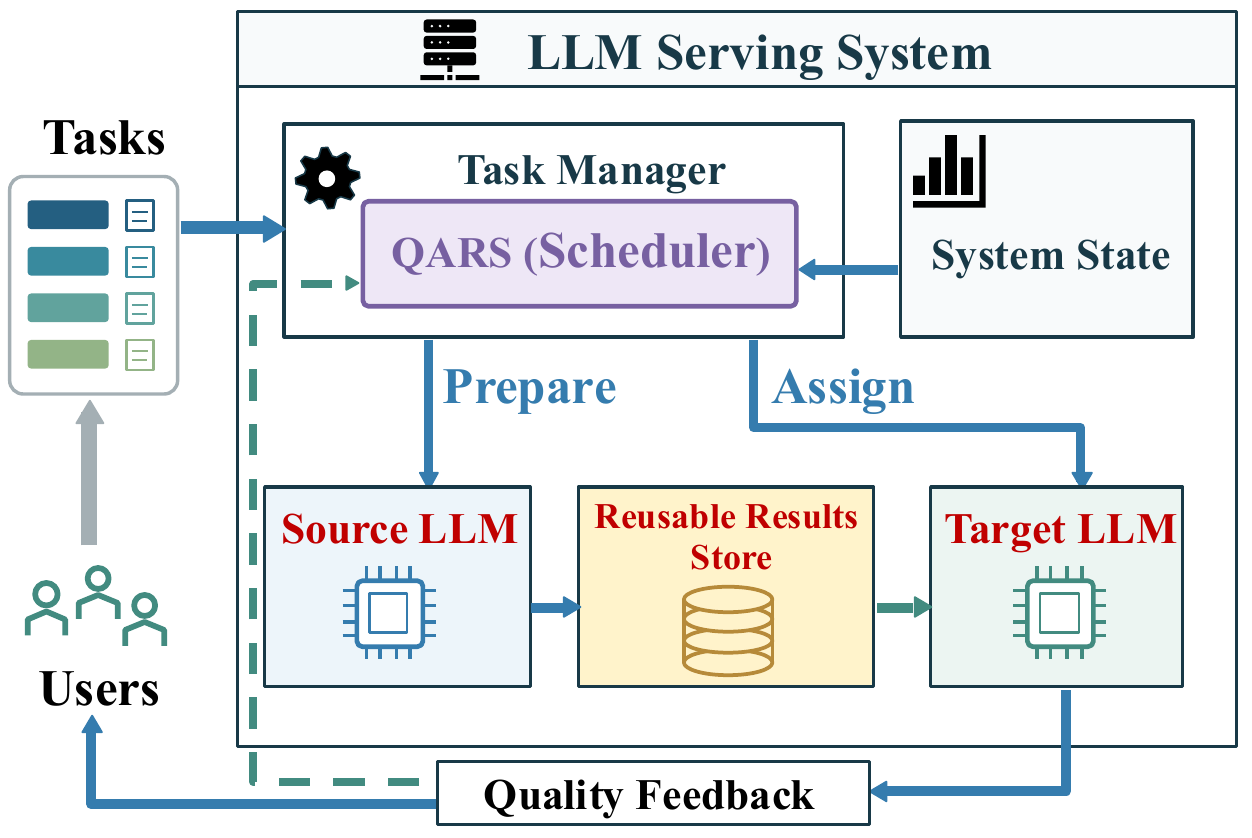}
\caption{System Overview.}
\label{fig:system-architecture}
\vspace{-0.4cm}
\end{wrapfigure}

Figure~\ref{fig:system-architecture} illustrates an LLM serving system supported by model-serving engines such as vLLM and SGLang \citep{kwon2023vllm,zheng2024sglang}, with \alg{} as its reuse scheduler. The system includes a task manager, source and target LLMs, and a reusable results store. \emph{Source} and \emph{target} denote the roles of computing and using a reusable result. The same model may play both roles. Cross-model reuse requires a compatible representation or a supported conversion \citep{liu2026droidspeak,fu2026cachetocache}.

Within the task manager, \alg{} uses the system state and quality feedback to jointly make scheduling (i.e., preparation and assignment) decisions. Based on these decisions, source models prepare selected results as needed and place them in the reusable results store. Target models execute tasks using the assigned results, and their outputs are evaluated to provide quality feedback, possibly after a delay. We consider \emph{decision epochs} $t\in\{1,\ldots,T\}$, each corresponding to one scheduling step, as modeled below:

\paragraph{Scheduling inputs.}
At the start of epoch $t$, the \emph{task set} $\cR_t$ contains the tasks to be scheduled, with $|\cR_t|\leq M$ for a \emph{workload bound} $M$. Each \emph{task} $r\in\cR_t$ is an admitted unit of work with a designated \emph{target model} $m_r^{\mathrm{tgt}}$. A \emph{reusable result} $g$ is a complete intermediate output for a specified input, computed by its \emph{source model} $m_g^{\mathrm{src}}$ under a preparation recipe that specifies its representation and any conversion needed for reuse. Examples include a document summary and a cached KV prefix \citep{liu2026droidspeak}. The \emph{result catalogue} $\cG_t$ is the finite set of results retained or available for preparation at epoch $t$.

A \emph{compatible pair} $(r,g)$ means that task $r$ can execute using result $g$, and the \emph{compatibility set} $\cE_t\subseteq\cR_t\times\cG_t$ collects these pairs.\footnote{Reuse is \emph{cross-model} when $m_g^{\mathrm{src}}\ne m_r^{\mathrm{tgt}}$. This setting also supports native execution (i.e., execution without computation reuse) and same-model reuse.} A \emph{context vector} $\boldsymbol{\phi}_{rg,t}\in\R^{D_\phi}$ describes each compatible pair $(r,g)$. It encodes the source--target model identities (i.e., $m_g^{\mathrm{src}}$ and $m_r^{\mathrm{tgt}}$), preparation recipe (e.g., the requested summary length) and task context (e.g., document length and question--document overlap). The model pair and preparation recipe together define a \emph{route}. Constructed from model settings and task inputs, the vector is known before scheduling and satisfies $\|\boldsymbol{\phi}_{rg,t}\|_2\leq1$.

There are two types of computation costs: \emph{preparation costs} and \emph{execution costs}. The \emph{preparation cost} $F_{g,t}\geq0$ is the work required to make $g$ available. It is paid once per selected result at epoch $t$, regardless of how many tasks use it, and is zero for already available results.  The \emph{execution cost} $e_{rg,t}\geq0$ measures the work to execute task $r$ using $g$, such as reading a summary, or loading cached states and recomputing selected layers. Both costs use the same work units and can be estimated through profiling \citep{zhong2024distserve}. They are treated as known before scheduling. Native execution incurs its full per-task execution cost with no shared preparation charge.

For $P$ \emph{resource types}, such as GPU compute capacity and memory, the \emph{result reservation} $\boldsymbol{c}_{g,t}\in\R^P$ and the \emph{task reservation} $\boldsymbol{c}_{rg,t}\in\R^P$ specify the nonnegative resources reserved for making result $g$ available and executing task $r$ using $g$, respectively, with $\bar{\boldsymbol{c}}_t\in\R^P$ denoting the available \emph{capacity}.

The \emph{observed state} $\mathsf{s}_t$ collects the scheduling inputs known before the decision: the task set $\cR_t$, result catalogue $\cG_t$, compatibility set $\cE_t$, context vectors $\boldsymbol{\phi}_{rg,t}$, costs $(F_{g,t},e_{rg,t})$, resource data $(P,\boldsymbol{c}_{g,t},\boldsymbol{c}_{rg,t},\bar{\boldsymbol{c}}_t)$, and a feasible fallback schedule.

\paragraph{Scheduling decisions and cost.}
Given the observed state $\mathsf{s}_t$ and the feedback already revealed, the scheduler jointly chooses preparation and assignment decisions $\boldsymbol{a}=(\boldsymbol{x},\boldsymbol{y})$ at epoch $t$. The \emph{preparation decision} $y_g=1$ selects result $g\in\cG_t$ for use, preparing it if it is not already available. The \emph{task assignment} $x_{rg}=1$ assigns task $r$ to result $g$ for a compatible pair $(r,g)\in\cE_t$. Let $\ell_{rg,t}$ denote the unknown \emph{expected quality loss} of this assignment conditional on $\mathsf{s}_t$. For a \emph{quality weight} $\omega>0$, the expected quality-aware cost is
\begin{equation}
 J_t(\boldsymbol{a})=
 \underbrace{\sum_{g\in\cG_t}F_{g,t}y_g}_{\text{preparation cost}}
 +\underbrace{\sum_{(r,g)\in\cE_t}e_{rg,t}x_{rg}}_{\text{execution cost}}
 +\underbrace{\omega\sum_{(r,g)\in\cE_t}\ell_{rg,t}x_{rg}}_{\text{weighted quality loss}}.
 \label{eq:objective}
\end{equation}
The quality weight $\omega$ converts expected quality loss into work units and balances computation cost and quality loss. Shared preparation costs are charged through $\boldsymbol{y}$, while task-dependent execution costs and quality losses enter through $\boldsymbol{x}$. Appendix~\ref{app:omega-sweep} provides evaluation under different $\omega$.

\paragraph{Quality feedback after scheduling.}
After execution, the \emph{observed quality loss} $\ell^{\mathrm{o}}_{rg,t}\in[0,1]$ evaluates the output of task $r$ using result $g$, for example as one minus a normalized task score. Its mean conditional on $\mathsf{s}_t$ is $\ell_{rg,t}$. Native execution and same-model reuse may also incur quality loss. Following linear bandit and semi-bandit models \citep{abbasi2011linear,wen2015comblin}, we model\footnote{Generalized linear \citep{li2017glmbandit} and neural \citep{zhou2020neuralucb} bandits offer richer quality models. Extending the confidence and regret analyses to them is future work.} $\ell_{rg,t}=\boldsymbol{\phi}_{rg,t}^{\top}\boldsymbol{\theta}^\star$ with $\|\boldsymbol{\theta}^\star\|_2\leq S$, for unknown $\boldsymbol{\theta}^\star$ and a known \emph{parameter-norm bound} $S$. Feedback is available only for selected assignments and may be delayed. For a known \emph{feedback-delay bound} $D$, each assignment selected at epoch $t$ yields one observation revealed before epoch $t+D+1$. Thus, $D=0$ means that feedback arrives before the next epoch, and at most $MD$ observations are pending before each scheduling decision. We make the following noise assumption \citep{abbasi2011linear,howson2023qars}:

\begin{assumption}[Feedback noise]\label{ass:online}
Given the information available before each reveal, $\varepsilon=\ell^{\mathrm{o}}-\boldsymbol{\phi}^\top\boldsymbol{\theta}^\star$ is mean-zero and $\sigma$-sub-Gaussian for a known \emph{noise scale} $\sigma$.
\end{assumption}

Binary task errors and normalized scores in $[0,1]$ satisfy the noise assumption with $\sigma=1/2$ by Hoeffding's lemma, provided the residuals are conditionally mean-zero. For observations sharing a result, the block extension in Appendix~\ref{app:block-feedback} accounts for their dependence with noise scale $\sigma\sqrt{\chi}$, where $\chi$ is a \emph{dependence factor}.

\subsection{Problem Formulation}\label{sec:problem}
\paragraph{Per-epoch scheduling problem $\mathcal{P}_t$.}
At epoch $t$, we minimize the quality-aware cost \eqref{eq:objective}:
\begin{subequations}
\renewcommand{\theequation}{\theparentequation-\arabic{equation}}
\begin{align}
 \qquad
 \min_{\boldsymbol{a}=(\boldsymbol{x},\boldsymbol{y})}\quad
 &J_t(\boldsymbol{a})
 \label{eq:epoch-problem}\tag{$\mathcal{P}_t$}\\
 \text{s.t.}\quad
 &x_{rg}\leq y_g,\qquad (r,g)\in\cE_t,
 \label{eq:preparation}\\
 &\sum_{g\in\cG_t}\boldsymbol{c}_{g,t}y_g
   +\sum_{(r,g)\in\cE_t}\boldsymbol{c}_{rg,t}x_{rg}
   \leq \bar{\boldsymbol{c}}_t,
 \label{eq:capacity}\\
  &\sum_{g:(r,g)\in\cE_t}x_{rg}=1,\qquad r\in\cR_t,
 \label{eq:assignment}\\
 &x_{rg}\in\{0,1\}\ ((r,g)\in\cE_t),\quad
   y_g\in\{0,1\}\ (g\in\cG_t).
 \label{eq:binary}
\end{align}
\end{subequations}

Constraint~\eqref{eq:preparation} links each assignment to a prepared result, allowing tasks to share its preparation cost. Constraint~\eqref{eq:capacity} enforces resource capacities componentwise: the first sum accounts for reservations per selected result, and the second for reservations per assignment. Each task is modeled as a single target-model execution, with compatible results serving as alternative reuse options, such as different summaries of the same document. Constraint~\eqref{eq:assignment} therefore assigns every admitted task to exactly one compatible option, including native execution without reuse.\footnote{Tasks that jointly use multiple results require modeling their combined quality and extending the scheduling algorithm and analysis. We leave this extension to future work.} Both preparation and assignment decisions are binary by \eqref{eq:binary}.

\paragraph{Long-term scheduling problem $\mathbb{P}$.}
Across $T$ epochs, we minimize the expected regret:
\begin{equation}
 \min_{\pi\in\Pi}\quad
 \mathbb{E}_{\pi}[R_T]
 =
 \mathbb{E}_{\pi}\left[
 \sum_{t=1}^T
 \bigl[J_t(\boldsymbol{a}_t)-J_t(\boldsymbol{a}_t^\star)\bigr]
 \right].
 \label{eq:online-problem}\tag{$\mathbb{P}$}
\end{equation}
Here, $\Pi$ denotes the set of online policies that choose $\boldsymbol{a}_t\in\cA_t(\mathsf{s}_t)$ using only the current observed state and the feedback already revealed, where $\cA_t(\mathsf{s}_t)$ is the feasible set of $\mathcal{P}_t$. Expected quality losses remain unknown at decision time, and the expectation averages over randomness in the online process. The benchmark $\boldsymbol{a}_t^\star$ solves $\mathcal{P}_t$ with the true expected quality losses under the same $\mathsf{s}_t$. Thus, regret evaluates reuse decisions along the observed state sequence rather than comparing policies that induce different retention histories.

Problem $\mathcal{P}_t$ is computationally challenging because shared preparation costs couple task assignments. Problem $\mathbb{P}$ further requires learning unknown quality losses from delayed feedback to guide these decisions. We establish the following complexity result.

\begin{proposition}[Computational complexity]\label{prop:epoch-hard}
Problem $\mathcal{P}_t$ is NP-hard even with known quality losses.
\end{proposition}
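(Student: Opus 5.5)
We prove the claim by a polynomial-time reduction from the uncapacitated facility location problem or, more simply, from \textsc{Set Cover}. Both are NP-hard, and their decision versions are NP-complete. The reduction turns off everything in $\mathcal{P}_t$ except the shared preparation structure. We take $P=1$, all reservations $\boldsymbol{c}_{g,t}=\boldsymbol{c}_{rg,t}=0$ and capacity $\bar{\boldsymbol{c}}_t=0$, so \eqref{eq:capacity} is vacuous. We also set the quality losses to a known constant, $\ell_{rg,t}=0$ for every pair. This is realizable in the model by choosing $\boldsymbol{\theta}^\star=\mathbf 0$ (any contexts with $\|\boldsymbol{\phi}_{rg,t}\|_2\le 1$ then work). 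Because the losses are given, the hardness holds even with known quality.

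Given a \textsc{Set Cover} instance with universe $U=\{u_1,\dots,u_n\}$, a family $\{S_1,\dots,S_k\}$ and a budget $B$, we build the following instance of $\mathcal{P}_t$.
\begin{itemize}[leftmargin=*]
\item There is one task $r_i$ per element $u_i$ and one result $g_j$ per set $S_j$, so $\cR_t=U$ and $\cG_t=\{g_1,\dots,g_k\}$.
\item The compatibility set is $\cE_t=\{(r_i,g_j):u_i\in S_j\}$.
\item The costs are $F_{g_j,t}=1$ and $e_{r_ig_j,t}=0$.
\end{itemize}
The observed state must also contain a feasible fallback schedule. If some element lies in no set, the \textsc{Set Cover} instance is trivially a no-instance and we map it to a fixed no-instance. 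Otherwise, preparing all results and assigning each task to any compatible result is a fallback schedule of cost $k$. The construction is clearly polynomial in the input size.

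Next we show that the optimal value of $\mathcal{P}_t$ equals the minimum cover size. Let $(\boldsymbol x,\boldsymbol y)$ be a feasible schedule. By \eqref{eq:assignment}, each $r_i$ is assigned to some $g_j$ with $u_i\in S_j$. By \eqref{eq:preparation}, that $g_j$ has $y_{g_j}=1$. Hence $\{S_j:y_{g_j}=1\}$ is a cover, and its size is exactly $J_t$. Conversely, let $\mathcal C$ be a cover. Set $y_{g_j}=1$ for $S_j\in\mathcal C$, and assign each $r_i$ to one chosen set containing $u_i$. This schedule is feasible and has cost $|\mathcal C|$. Therefore a cover of size $\le B$ exists if and only if $\mathcal{P}_t$ has a schedule of cost $\le B$. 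It follows that $\mathcal{P}_t$ is NP-hard, and in fact hard to approximate within $(1-o(1))\ln n$ by the known inapproximability of \textsc{Set Cover}.

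The main obstacle is not the combinatorics but checking that the degenerate instance satisfies every structural assumption of the model. Zero costs and zero reservations must be allowed, which they are, since costs are only required to be nonnegative. The fallback schedule must exist, which we handled above. The linear-quality assumption must hold, which the choice $\boldsymbol{\theta}^\star=\mathbf 0$ settles. If strictly positive execution costs were required, we would instead use the uncapacitated facility location reduction with general $e_{rg,t}$, which is NP-hard by the same argument.
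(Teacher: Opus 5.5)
Your proof is correct and is essentially the paper's own Set Cover reduction. It uses one task per element and one result per set, unit preparation costs, zero execution costs, quality losses and reservations, and a non-binding capacity, so the optimal cost equals the minimum cover size. The only difference is that the paper also gives every task a native option $g_0$ with $F_{g_0}=0$ and execution cost $N_{\mathrm{set}}+1$, keeping native execution available as in the system model and showing no optimum uses it. You omit this option, which the formulation permits, and you handle uncoverable instances by mapping them to a fixed no-instance rather than assuming the sets cover the universe.
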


Appendix~\ref{app:hardness-proof} gives the proof by a Set Cover reduction. Appendix~\ref{app:coalition-proof} shows that reassigning tasks together can make reuse worthwhile by sharing preparation costs, even when individual reassignments offer no benefit. This motivates joint scheduling to capture shared savings, with a computable global gap to bound the remaining optimization error. Next, Section~\ref{sec:online} presents the joint scheduling algorithm.

\section{Quality-Aware Reuse Scheduling}\label{sec:online}
\subsection{Algorithm Overview}\label{sec:algorithm-design}
\alg{} (Algorithm~\ref{alg:qars}) addresses the online problem $\mathbb{P}$ by combining quality learning with joint scheduling. For Challenge~I, it learns task-dependent reuse quality from selected, possibly delayed feedback and constructs confidence bounds $\ell_{rg,t}^{\pm}$ (Lines~\ref{line:qars-feedback}--\ref{line:qars-estimates}, Section~\ref{sec:learn}). For Challenge~II, Joint Adaptive Scheduling (\textsc{JAS}) uses optimistic quality estimates to jointly choose which results to prepare and how to assign tasks to them (Line~\ref{line:qars-schedule}, Section~\ref{sec:dual}). It returns a feasible schedule and a global lower bound. The gap between the schedule's optimistic cost and this bound determines when to stop refinement and bounds the optimization error in regret (Section~\ref{sec:theory}).\footnote{We also consider a Thompson-sampling variant of \alg{}, denoted \alg{}-TS. It replaces $\ell_{rg,t}^{-}$ with clipped predictions from a single Gaussian parameter draw centered at the ridge estimate, while retaining the same feedback updates, \textsc{JAS} and stopping threshold. The effectiveness of \alg{}-TS is also evaluated in Section~\ref{sec:experiments}.} We detail each part below.

\begin{algorithm}[t]
\caption{Quality-Aware Reuse Scheduling (\alg{})}\label{alg:qars}
\KwIn{Confidence parameters $(\lambda_0,S,\sigma,\delta)$,
quality weight $\omega>0$, accuracy factor $\alpha>0$ and optional deadline.}
\KwOut{Feasible decision $\boldsymbol{a}_t$ at each epoch.}
\textbf{Initialize} feedback history $\mathcal H_0\leftarrow\varnothing$\;
\For{$t=1,\ldots,T$}{
Receive the observed state $\mathsf{s}_t$: $\cR_t,\cG_t,\cE_t$, $F_{g,t},e_{rg,t}$,
$\boldsymbol{c}_{g,t},\boldsymbol{c}_{rg,t},\bar{\boldsymbol{c}}_t$, $\boldsymbol{\phi}_{rg,t}$\nllabel{line:qars-input}\;
Form $\mathcal H_t$ by appending newly revealed
$(\tau,r,g,\boldsymbol{\phi}_{rg,\tau},\ell^{\mathrm{o}}_{rg,\tau})$ to $\mathcal H_{t-1}$ in reveal order\nllabel{line:qars-feedback}\;
Use the context vectors $\boldsymbol{\phi}_{rg,t}$ to compute
$\ell_{rg,t}^{\pm},\beta_t$ by \eqref{eq:estimator}--\eqref{eq:confidence}\nllabel{line:qars-estimates}\;
$\boldsymbol{a}_t\leftarrow
\operatorname{JAS}(\mathsf{s}_t,\ell_{rg,t}^{\pm},\boldsymbol{V}_t,\beta_t;\alpha,\omega)$
(Algorithm~\ref{alg:joint-scheduling})\nllabel{line:qars-schedule}\;
Execute $\boldsymbol{a}_t$, request quality feedback and output $\boldsymbol{a}_t$\nllabel{line:qars-execute}\;
}
\end{algorithm}

\subsection{Quality Learning for Challenge I}\label{sec:learn}
\alg{} estimates task-dependent quality losses from revealed feedback and uses confidence bounds to quantify the remaining uncertainty. The \emph{feedback history} $\mathcal H_t$ contains each quality observation revealed before scheduling at epoch $t$ once, in reveal order. Let $\boldsymbol{\phi}_i,\ell^{\mathrm{o}}_i$ be the context vector and observed quality loss of record $i\in\mathcal H_t$. With \emph{regularization} $\lambda_0>0$, the \emph{observed-context matrix} $\boldsymbol{V}_t$ and the regularized estimator $\widehat{\boldsymbol{\theta}}_t$ are
\begin{equation}
 \boldsymbol{V}_t=\lambda_0 \boldsymbol{I}+\sum_{i\in\mathcal H_t}\boldsymbol{\phi}_i\boldsymbol{\phi}_i^\top,
 \qquad \widehat{\boldsymbol{\theta}}_t=\boldsymbol{V}_t^{-1}\sum_{i\in\mathcal H_t}\boldsymbol{\phi}_i\ell^{\mathrm{o}}_i.
 \label{eq:estimator}
\end{equation}
For positive definite $\boldsymbol{B}$, define $\|\boldsymbol{v}\|_{\boldsymbol{B}}=(\boldsymbol{v}^\top \boldsymbol{B}\boldsymbol{v})^{1/2}$. For a \emph{failure probability} $\delta\in(0,1)$, the \emph{confidence radius} $\beta_t$ and quality-loss bounds $\ell_{rg,t}^{\pm}$ are
\begin{align}
 \beta_t&=\sigma\sqrt{\log\frac{\det \boldsymbol{V}_t}{\det(\lambda_0 \boldsymbol{I})}
                    +2\log(1/\delta)}+\sqrt{\lambda_0}S,\nonumber\\
 \ell_{rg,t}^{\pm}&=\operatorname{clip}_{[0,1]}
       \left(\boldsymbol{\phi}_{rg,t}^{\top}\widehat{\boldsymbol{\theta}}_t
             \pm\beta_t\|\boldsymbol{\phi}_{rg,t}\|_{\boldsymbol{V}_t^{-1}}\right).
 \label{eq:confidence}
\end{align}
Clipping keeps the bounds within the quality-loss range $[0,1]$. \textsc{JAS} uses the lower loss bound $\ell_{rg,t}^-$ as an optimistic estimate to guide reuse decisions. Our contribution connects quality learning with joint scheduling: the uncertainty terms $\beta_t\|\boldsymbol{\phi}_{rg,t}\|_{\boldsymbol{V}_t^{-1}}$ for selected assignments set the stopping tolerance for the computable optimization gap. This connection adapts scheduling accuracy to the remaining quality uncertainty. Appendix~\ref{app:confidence-basic} establishes the confidence bounds, and Appendix~\ref{app:v7-confidence} gives the centered route-wise implementation used in the experiments.

\subsection{Joint Scheduling for Challenge II}\label{sec:dual}
Using the learned quality estimates, \textsc{JAS} jointly chooses results to prepare and assigns tasks to them (Algorithm~\ref{alg:joint-scheduling}). Replacing $\ell_{rg,t}$ with $\ell_{rg,t}^{\pm}$ in \eqref{eq:objective} defines the optimistic cost $J_t^-$ and pessimistic cost $J_t^+$. \textsc{JAS} improves the best feasible schedule, called the \emph{incumbent}, and tightens a global lower bound. The gap between its optimistic cost and this bound controls the optimization error.

\paragraph{Component structure.}
A \emph{component} groups tasks with reusable results from the same input, such as questions and alternative summaries of one document. Reuse stays within components, which share result-storage capacity, while task execution resources are allocated separately. Motivated by prompt-module and prefix reuse \citep{gim2024promptcache,zheng2024sglang} and shared KV-cache management \citep{kwon2023vllm}, we adopt the following assumption.

\begin{assumption}[Component structure]\label{ass:components}
The compatibility graph has $\nu$ disjoint components $(\cR_i,\cG_i)$ with $|\cG_i|\leq K_{\mathrm{c}}$. There is one scalar capacity, with nonnegative reservations charged only to used results and zero task reservations.
\end{assumption}

\paragraph{Rounded local scheduling.}
Under Assumption~\ref{ass:components}, \textsc{JAS} uses Rounded Local Scheduling (\textsc{RLS}, Algorithm~\ref{alg:rls}), reported as \alg{} (Bounded). Each covering result subset defines a state by assigning tasks to their cheapest compatible options under optimistic costs, preserving the integral optimum by Lemma~\ref{lem:subset-original}. Dynamic programming (DP) selects one state per component under the shared capacity, rounding costs above component minima and retaining the minimum exact reservation per rounded total. The feasible schedule and global lower bound differ by at most $\nu B/\kappa$ by Lemma~\ref{thm:component-solver}, where $B$ is the initial gap. Higher rounding precision $\kappa$ improves accuracy at greater computational cost.

\paragraph{General local scheduling.}
For broader constraints, \textsc{JAS} uses General Local Scheduling (\textsc{GLS}, Algorithm~\ref{alg:gls}), reported as \alg{} (General), which also handles component workloads. Accepted moves jointly change result selections and task assignments to share preparation costs, preserve feasibility and reduce optimistic cost, as detailed in Appendix~\ref{app:collective-moves}. Assignment and capacity prices give a Lagrangian lower bound and screen results that cannot improve the incumbent by Proposition~\ref{prop:dual-screening}. If needed, linear programming (LP) strengthens the bound and mixed-integer linear programming (MILP) finds feasible candidates. Both procedures retain feasible schedules and valid global lower bounds, but only \textsc{RLS} has the refinement computation bound. The state, confidence bounds, and $\boldsymbol{V}_t$ stay fixed within each call. Algorithm~\ref{alg:joint-scheduling} omits epoch subscripts. Appendices~\ref{app:branch-procedures}, \ref{app:general-solver} and~\ref{app:structured-proofs} give the procedures, \textsc{GLS} initialization and \textsc{RLS} proofs, respectively.

\begin{algorithm}[t]
\caption{Joint Adaptive Scheduling (\textsc{JAS})}
\label{alg:joint-scheduling}
\KwIn{$\mathsf{s}_t,\ell_{rg,t}^{\pm},\boldsymbol{V}_t,\beta_t$;
$\alpha,\omega>0$; optional deadline.}
\KwOut{Feasible decision $\boldsymbol{a}_t^0$,
with retained global lower bound $L_t^-$.}
Set $\widetilde e_{rg}\leftarrow e_{rg}+\omega\ell_{rg}^-$
and form $J^{\pm}$ by \eqref{eq:objective}\;
Initialize $\boldsymbol{a}^0\leftarrow$ feasible fallback\;
\eIf{the workload satisfies Assumption~\ref{ass:components}}{
$(\boldsymbol{a}^0,L^-)\leftarrow
\operatorname{RLS}(\mathsf{s}_t,\widetilde e_{rg},
\boldsymbol{a}^0,\boldsymbol{V}_t,\beta_t;\alpha,\omega)$
(Algorithm~\ref{alg:rls})\nllabel{line:jas-rls}\;
}{
$(\boldsymbol{a}^0,L^-)\leftarrow
\operatorname{GLS}(\mathsf{s}_t,\widetilde e_{rg},
\boldsymbol{a}^0,\boldsymbol{V}_t,\beta_t;\alpha,\omega)$
(Algorithm~\ref{alg:gls})\nllabel{line:jas-gls}\;
}
\Return{$\boldsymbol{a}_t^0\leftarrow\boldsymbol{a}^0$;
retain $L_t^-\leftarrow L^-$}\;
\end{algorithm}

\paragraph{Quality-aware stopping.}\label{sec:stopping}
Quality-aware stopping limits scheduling work by matching accuracy to quality uncertainty. Let $\boldsymbol{a}_t^0$ be the incumbent, $m_t^-=\min_{\boldsymbol{a}\in\cA_t(\mathsf{s}_t)}J_t^-(\boldsymbol{a})$ the optimistic optimum, and $L_t^-\leq m_t^-$ the global lower bound. The \emph{actual optimization error} satisfies
\begin{equation}
 0\leq\eta_t=J_t^-(\boldsymbol{a}_t^0)-m_t^-
 \leq J_t^-(\boldsymbol{a}_t^0)-L_t^-.
 \label{eq:gaps}
\end{equation}
The gap includes lower-bound slack $m_t^--L_t^-$, which may persist for an LP bound even at optimality. For an \emph{accuracy factor} $\alpha>0$, \textsc{JAS} stops when
\begin{equation}
 J_t^-(\boldsymbol{a}_t^0)-L_t^-\leq\alpha\omega\beta_t
 \sum_{(r,g):x_{rg,t}^0=1}
 \|\boldsymbol{\phi}_{rg,t}\|_{\boldsymbol{V}_t^{-1}}.
 \label{eq:dual-stop}
\end{equation}
The threshold expresses selected assignments' quality uncertainty in work units, keeping optimization error at the same scale.

After each completed update, \textsc{JAS} checks this condition and returns the retained feasible schedule and valid bound on success or at the deadline. Missing or unverified outputs leave both unchanged. Section~\ref{sec:regret} accounts for the remaining optimization error if the deadline is reached first.

\section{Theoretical Guarantees}\label{sec:theory}
In this section, we provide theoretical guarantees on decision quality and computational efficiency.

\subsection{Decision Quality}\label{sec:regret}
We analyze \alg{}'s decision quality for the per-epoch problem $\mathcal{P}_t$ and the long-term problem $\mathbb{P}$.

\begin{theorem}[Decision quality]\label{thm:directional-delay}
Under Assumption~\ref{ass:online} and $\lambda_0\geq\max\{1,M\}$, \alg{} satisfies, with probability at least $1-\delta$,
\begin{align}
 \mathcal{P}_t:\quad
 J_t(\boldsymbol{a}_t)-J_t(\boldsymbol{a}_t^\star)
 &\leq J_t^+(\boldsymbol{a}_t)-L_t^-,
 \quad t\leq T,
 \label{eq:excess-cost-bound}\\
 \mathbb{P}:\quad
 R_T&\leq\sum_{t\leq T}\eta_t
       +2\omega\overline\beta_T\Xi_T.
 \label{eq:directional-regret}
\end{align}
\end{theorem}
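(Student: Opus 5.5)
On the good event $\mathcal{E}=\{\|\widehat{\boldsymbol{\theta}}_t-\boldsymbol{\theta}^\star\|_{\boldsymbol{V}_t}\le\beta_t\ \forall t\le T\}$, every step is deterministic. I would first establish that $\mathcal{E}$ holds with probability at least $1-\delta$. This is the self-normalized bound of \citet{abbasi2011linear}, applied to the feedback history $\mathcal H_t$ indexed in reveal order. Assumption~\ref{ass:online} makes the noise a sub-Gaussian martingale difference sequence with respect to the reveal filtration, and $\boldsymbol{V}_t$ and $\widehat{\boldsymbol{\theta}}_t$ are exactly the ridge quantities built from that sequence. Hence the uniform-in-time bound transfers directly, with delay affecting only which records are present, not validity. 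This is the content of Appendix~\ref{app:confidence-basic}, which I would invoke. On $\mathcal{E}$, Cauchy--Schwarz gives $|\boldsymbol{\phi}_{rg,t}^\top(\widehat{\boldsymbol{\theta}}_t-\boldsymbol{\theta}^\star)|\le\beta_t\|\boldsymbol{\phi}_{rg,t}\|_{\boldsymbol{V}_t^{-1}}$. Since $\ell_{rg,t}\in[0,1]$, clipping preserves the sandwich $\ell^-_{rg,t}\le\ell_{rg,t}\le\ell^+_{rg,t}$. It also gives $\ell^+_{rg,t}-\ell^-_{rg,t}\le2\beta_t\|\boldsymbol{\phi}_{rg,t}\|_{\boldsymbol{V}_t^{-1}}$, because clipping to an interval containing the truth can only shrink the width.

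For \eqref{eq:excess-cost-bound}, monotonicity of $J_t$ in the losses gives $J_t^-\le J_t\le J_t^+$ pointwise on $\cA_t(\mathsf{s}_t)$. Therefore $J_t(\boldsymbol{a}_t)\le J_t^+(\boldsymbol{a}_t)$ and $J_t(\boldsymbol{a}_t^\star)\ge J_t^-(\boldsymbol{a}_t^\star)\ge m_t^-\ge L_t^-$. The last inequality uses that \textsc{JAS} only retains valid lower bounds, including at a deadline, as stated in Section~\ref{sec:stopping}. Subtracting the two chains yields the per-epoch certificate.

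For \eqref{eq:directional-regret}, I decompose
\[
J_t(\boldsymbol{a}_t)-J_t(\boldsymbol{a}_t^\star)\le\bigl[J_t(\boldsymbol{a}_t)-J_t^-(\boldsymbol{a}_t)\bigr]+\bigl[J_t^-(\boldsymbol{a}_t)-m_t^-\bigr]+\bigl[m_t^--J_t^-(\boldsymbol{a}_t^\star)\bigr].
\]
The middle term is $\eta_t$ by \eqref{eq:gaps}, and the last term is $\le0$. The first term equals $\omega\sum_{x_{rg,t}=1}(\ell_{rg,t}-\ell^-_{rg,t})$, which is at most $2\omega\beta_t\sum_{x_{rg,t}=1}\|\boldsymbol{\phi}_{rg,t}\|_{\boldsymbol{V}_t^{-1}}$. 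Summing over $t$ and bounding $\beta_t\le\overline\beta_T$ (by monotonicity of $\det\boldsymbol{V}_t$) gives the claim, with $\Xi_T$ the cumulative selected width $\sum_t\sum_{x_{rg,t}=1}\|\boldsymbol{\phi}_{rg,t}\|_{\boldsymbol{V}_t^{-1}}$. I expect $\overline\beta_T$ and $\Xi_T$ to be defined that way later in the paper.

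The main obstacle is the subsequent control of $\Xi_T$ under delay, which is where $\lambda_0\ge\max\{1,M\}$ enters. Assignments at epoch $t$ are measured in $\boldsymbol{V}_t^{-1}$, which lacks both the up to $MD$ pending observations and the other $M-1$ same-epoch contexts. My plan is to compare with the fictitious matrix $\boldsymbol{W}_t$ that includes all contexts selected before epoch $t$. Since $\boldsymbol{W}_t\succeq\boldsymbol{V}_t$, each width splits as $\|\boldsymbol{\phi}\|_{\boldsymbol{V}_t^{-1}}\le\|\boldsymbol{\phi}\|_{\boldsymbol{W}_t^{-1}}+(\|\boldsymbol{\phi}\|_{\boldsymbol{V}_t^{-1}}-\|\boldsymbol{\phi}\|_{\boldsymbol{W}_t^{-1}})$. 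The second piece is bounded by a delay penalty of order $MD$ times a per-observation width, as in \citet{howson2023qars}. The batched elliptical-potential lemma, with $\lambda_0\ge M$ ensuring each batch adds at most a constant factor to $\boldsymbol{W}_t$, then gives $\sum_t\sum\|\boldsymbol{\phi}\|_{\boldsymbol{W}_t^{-1}}=\widetilde O(\sqrt{MT D_\phi\cdot M})$.

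However, the inequality \eqref{eq:directional-regret} itself requires only the decomposition above. The delay analysis is needed only to turn $\Xi_T$ into a rate, and together with the stopping rule \eqref{eq:dual-stop}, which bounds $\eta_t$ by $\alpha\omega\beta_t\times$ the selected width, to obtain $\widetilde O(\sqrt T)$.
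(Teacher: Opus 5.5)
Your handling of the confidence event and of \eqref{eq:excess-cost-bound} matches the paper's Steps~1--2. Your three-term decomposition of the per-epoch excess cost is the paper's \eqref{eq:directional-epoch-regret}.

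The gap is in \eqref{eq:directional-regret}. You guessed that $\Xi_T$ denotes the raw cumulative width $\sum_t\sum_{x_{rg,t}=1}\|\boldsymbol{\phi}_{rg,t}\|_{\boldsymbol{V}_t^{-1}}$, which would make the inequality a restatement of the decomposition. In the paper, $\Xi_T$ is an explicit bound, $\Xi_T=\sqrt{N_T(2\Lambda_T+H_T)}$. Here $\Lambda_T=D_\phi\log(1+N_T/(\lambda_0D_\phi))$, and $H_T=\sum_t\tr((\boldsymbol{V}_t^{-1}-\boldsymbol{W}_t^{-1})\boldsymbol{\Phi}_t)$ is the realized delay term. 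Likewise, $\overline\beta_T$ is defined through $\Lambda_T$, not through $\det\boldsymbol{V}_T$. The hypothesis $\lambda_0\geq\max\{1,M\}$ should have signalled this. The confidence event holds for every $\lambda_0>0$, so the hypothesis is in the theorem only because controlling the cumulative width is part of its proof, not a later rate computation. What you defer is exactly the paper's Steps~3--4, so the stated regret inequality is not yet proved.

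The missing argument works with squared widths, where delay enters through an identity rather than an estimate:
$\sum_{x_{rg,t}=1}\|\boldsymbol{\phi}_{rg,t}\|^2_{\boldsymbol{V}_t^{-1}}=\tr(\boldsymbol{V}_t^{-1}\boldsymbol{\Phi}_t)=\tr(\boldsymbol{W}_t^{-1}\boldsymbol{\Phi}_t)+\tr((\boldsymbol{V}_t^{-1}-\boldsymbol{W}_t^{-1})\boldsymbol{\Phi}_t)$.
The first term needs the batched potential:
\begin{itemize}
\item The chain $\boldsymbol{\Phi}_t\preceq M\boldsymbol{I}\preceq\lambda_0\boldsymbol{I}\preceq\boldsymbol{W}_t$ puts the eigenvalues of $\boldsymbol{W}_t^{-1/2}\boldsymbol{\Phi}_t\boldsymbol{W}_t^{-1/2}$ in $[0,1]$. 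This is where $\lambda_0\geq M$ is used.
\item On that interval, $u\leq2\log(1+u)$ gives $\tr(\boldsymbol{W}_t^{-1}\boldsymbol{\Phi}_t)\leq2\log(\det\boldsymbol{W}_{t+1}/\det\boldsymbol{W}_t)$.
\item Telescoping, together with $\tr(\boldsymbol{W}_{T+1})\leq\lambda_0D_\phi+N_T$ and AM--GM on the eigenvalues, bounds the sum by $2\Lambda_T$.
\end{itemize}
Summing the second term over $t$ gives exactly $H_T$. Cauchy--Schwarz over the $N_T$ selected assignments then bounds the width sum by $\Xi_T$. Finally, $\beta_t\leq\overline\beta_T$ follows from $\boldsymbol{V}_t\preceq\boldsymbol{W}_{T+1}$ and the same determinant bound. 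Monotonicity of $\det\boldsymbol{V}_t$ alone does not produce $\Lambda_T$.

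Your sketched alternative would not recover the stated bound either. Splitting $\|\boldsymbol{\phi}\|_{\boldsymbol{V}_t^{-1}}$ additively and charging a Howson-style penalty of order $MD$ replaces the realized, direction-dependent $H_T$ with a worst-case, count-based term. The paper introduces that worst-case term only later, in \eqref{eq:directional-delay-envelope}, for Corollary~\ref{cor:horizon-regret}.
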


Appendix~\ref{app:directional-proof} provides the proof and defines the confidence-radius bound $\overline\beta_T$ and cumulative context uncertainty bound $\Xi_T$, including delayed feedback. The per-epoch bound is computable, while the regret bound separates learning and optimization error \citep{chen2013cmab,cuvelier2021aescb}. Completing \eqref{eq:dual-stop} gives $R_T\leq(2+\alpha)\omega\overline\beta_T\Xi_T$, keeping optimization error at the scale of learning uncertainty. If a deadline is reached first, the bound retains the residual optimization error $\eta_t$.

\begin{corollary}[Regret]\label{cor:horizon-regret}
Under the conditions of Theorem~\ref{thm:directional-delay}, suppose \eqref{eq:dual-stop} is satisfied at every nonempty epoch, with probability at least $1-\delta$,
\begin{equation}
 R_T=O\Bigl((2+\alpha)\,\omega
 \bigl(\sigma D_\phi\log T+S\sqrt{\lambda_0D_\phi\log T}\bigr)
 \sqrt{MT(1+MD/\lambda_0)}\Bigr)=\widetilde O(\sqrt T).
 \label{eq:horizon-rates}
\end{equation}
\end{corollary}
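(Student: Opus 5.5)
The plan is to start from the regret decomposition of Theorem~\ref{thm:directional-delay}, namely \eqref{eq:directional-regret}, which holds on the confidence event of probability at least $1-\delta$. If \eqref{eq:dual-stop} holds at every nonempty epoch, then \eqref{eq:gaps} bounds the optimization error by
\[
\eta_t\leq\alpha\omega\beta_t\sum_{x^0_{rg,t}=1}\|\boldsymbol{\phi}_{rg,t}\|_{\boldsymbol{V}_t^{-1}}.
\]
Empty epochs contribute nothing to the regret. Since $\beta_t\leq\overline\beta_T$, summing over $t$ should give $\sum_t\eta_t\leq\alpha\omega\overline\beta_T\Xi_T$, provided $\Xi_T$ (as defined in Appendix~\ref{app:directional-proof}) upper bounds the cumulative widths $\sum_t\sum_{x_{rg,t}=1}\|\boldsymbol{\phi}_{rg,t}\|_{\boldsymbol{V}_t^{-1}}$ of the selected assignments. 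Combined with \eqref{eq:directional-regret}, this yields $R_T\leq(2+\alpha)\omega\overline\beta_T\Xi_T$, which is the intermediate form stated after the theorem. What remains is to bound the two factors $\overline\beta_T$ and $\Xi_T$ explicitly.

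For $\overline\beta_T$, I will use the determinant--trace inequality. Because $\boldsymbol{V}_T$ contains at most $MT$ rank-one terms with $\|\boldsymbol{\phi}\|_2\leq1$,
\[
\log\frac{\det\boldsymbol{V}_T}{\det(\lambda_0\boldsymbol{I})}\leq D_\phi\log\Bigl(1+\frac{MT}{\lambda_0D_\phi}\Bigr)=O(D_\phi\log T),
\]
treating $M,\lambda_0$ as fixed or polynomial in $T$. Since $\beta_t$ is monotone in $t$, this gives
\[
\overline\beta_T=\sigma\sqrt{O(D_\phi\log T)+2\log(1/\delta)}+\sqrt{\lambda_0}S.
\]

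For $\Xi_T$, I will follow the standard elliptical-potential argument in its semi-bandit and delayed form. First, compare the width under $\boldsymbol{V}_t$ with the width under a fictitious matrix $\boldsymbol{W}_t$ that includes every selected context up to epoch $t-1$. This matrix differs from $\boldsymbol{V}_t$ by at most $MD$ pending rank-one terms. Next, handle the within-epoch batch of up to $M$ assignments, which are all evaluated under the same matrix. The condition $\lambda_0\geq M$ ensures $\boldsymbol{V}+\sum_{\text{batch}}\boldsymbol{\phi}\boldsymbol{\phi}^\top\preceq 2\boldsymbol{V}$, so batching costs only a constant factor. Likewise, $\boldsymbol{W}_t\preceq(1+MD/\lambda_0)\boldsymbol{V}_t$, which gives
\[
\|\boldsymbol{\phi}\|_{\boldsymbol{V}_t^{-1}}\leq\sqrt{1+MD/\lambda_0}\,\|\boldsymbol{\phi}\|_{\boldsymbol{W}_t^{-1}}.
\]
Then Cauchy--Schwarz over the at most $MT$ selected pairs, together with $\sum\min(1,\|\boldsymbol{\phi}\|^2_{\boldsymbol{W}^{-1}})\leq2D_\phi\log(1+MT/(\lambda_0D_\phi))$ (using $\lambda_0\geq1$ so each squared width is at most $1$), gives
\[
\Xi_T=O\bigl(\sqrt{MT(1+MD/\lambda_0)\,D_\phi\log T}\bigr).
\]

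Multiplying the two bounds, the product $\overline\beta_T\Xi_T$ has the form $(\sigma\sqrt{D_\phi\log T}+\sqrt{\lambda_0}S)\sqrt{D_\phi\log T}\sqrt{MT(1+MD/\lambda_0)}$, which is exactly \eqref{eq:horizon-rates}; the $\log(1/\delta)$ term is absorbed into the $O(\cdot)$. The main obstacle is the delay step: justifying the matrix comparison $\boldsymbol{W}_t\preceq(1+MD/\lambda_0)\boldsymbol{V}_t$. If the appendix defines $\Xi_T$ so that it already incorporates the delay, this step reduces to quoting that definition. Otherwise I will argue directly: each of the at most $MD$ pending terms satisfies $\boldsymbol{\phi}\boldsymbol{\phi}^\top\preceq\boldsymbol{I}\preceq\boldsymbol{V}_t/\lambda_0$, and summing these gives the factor $1+MD/\lambda_0$.
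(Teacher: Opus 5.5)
Your proposal is correct and follows essentially the paper's route. Like the paper, you plug the stopping rule into Theorem~\ref{thm:directional-delay} to get $R_T\leq(2+\alpha)\omega\overline\beta_T\Xi_T$, bound $\overline\beta_T$ by the determinant--trace inequality, and control delay via $\boldsymbol{P}_t\preceq MD\boldsymbol{I}\preceq(MD/\lambda_0)\boldsymbol{V}_t$, which is how the paper bounds $H_T\leq 2(MD/\lambda_0)\Lambda_T$ inside $\Xi_T=\sqrt{N_T(2\Lambda_T+H_T)}$. The only cosmetic difference is that the paper handles the within-epoch batch directly through $\tr(\boldsymbol{W}_t^{-1}\boldsymbol{\Phi}_t)\leq 2\log(\det\boldsymbol{W}_{t+1}/\det\boldsymbol{W}_t)$ (using $\boldsymbol{\Phi}_t\preceq\boldsymbol{W}_t$) instead of comparing against sequential one-at-a-time updates, which changes only constants.
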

Appendix~\ref{app:horizon-regret-proof} provides the proof. The corollary shows that matching scheduling accuracy to quality uncertainty achieves sublinear regret under bounded feedback delay, without requiring exact optimization at every epoch.

\subsection{Computational Efficiency}\label{sec:compute-guarantee}
We quantify the tradeoff between scheduling accuracy and computational complexity for \textsc{RLS}. Complexity counts scheduling arithmetic operations.

\begin{theorem}[Scheduling accuracy and computational complexity]
\label{thm:compute-regret}
Under Assumption~\ref{ass:components}, at each nonempty epoch $t$, \textsc{RLS} at a completed integer resolution $\kappa_t\geq1$ returns a feasible schedule and a valid global lower bound satisfying
\begin{equation}
 0\leq\eta_t
 \leq J_t^-(\boldsymbol{a}_t)-L_t^-
 \leq\frac{\nu_t B_t}{\kappa_t}.
 \label{eq:component-main-gap}
\end{equation}
Its computational complexity is $O(G^{\mathrm{g}}_t+N^{\mathrm{r}}_t(\kappa_t+1))$. With $\tau_t>0$ and no deadline, \textsc{RLS} reaches \eqref{eq:dual-stop} through doubling refinement with computational complexity
\begin{equation}
 O\!\left(G^{\mathrm{g}}_t+N^{\mathrm{r}}_t\nu_t
 \left(1+\frac{B_t}{\tau_t}\right)\right).
 \label{eq:statistical-work}
\end{equation}
\end{theorem}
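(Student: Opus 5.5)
The plan has three parts: certify a per-component rounded DP, count its operations, and then analyse the doubling loop.

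\textbf{Structure and the gap bound.} By Assumption~\ref{ass:components} the problem decomposes into $\nu_t$ components that are coupled only through one scalar capacity. By Lemma~\ref{lem:subset-original}, it suffices to optimise over covering result subsets per component. In each such state every task takes its cheapest compatible option under $\widetilde e_{rg}$, and the integral optimum is preserved. Let $N^{\mathrm{r}}_t$ be the total number of such states and $G^{\mathrm{g}}_t$ the work needed to enumerate them and evaluate their optimistic costs and reservations. For each component $i$, let $c_i^{\min}$ be its minimum state cost. I would take $B_t$ to be the initial gap, that is, the fallback's optimistic cost minus $\sum_i c_i^{\min}$, so that every excess $c(s)-c_i^{\min}$ that can matter is at most $B_t$. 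Any larger excess cannot appear in a schedule beating the fallback, and such states can be pruned.

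Set the grid width $\Delta=B_t/\kappa_t$ and round each excess up to an integer multiple of $\Delta$, i.e.\ to $\lceil (c(s)-c_i^{\min})/\Delta\rceil\in\{0,\dots,\kappa_t\}$. For each rounded total $k\le\kappa_t$, the DP over components stores the minimum exact reservation achieving that total. This is a knapsack DP indexed by the rounded cost, with reservation as the value being minimised. Two facts then give the gap.

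\begin{itemize}
\item \emph{Lower bound.} Rounding down instead gives costs that are at most the true costs. So the smallest rounded-down total $k$ whose stored reservation fits the capacity yields a valid bound $L_t^-=\sum_i c_i^{\min}+\Delta k$-type bound that is $\le m_t^-$, because exact reservations are kept and no feasible schedule is lost.
\item \emph{Feasible schedule.} The schedule recovered from the same DP cell is feasible, since its exact reservation fits. Its true cost exceeds its rounded cost by less than $\Delta$ per component, so $J_t^-(\boldsymbol{a}_t)-L_t^-\le\nu_t\Delta=\nu_tB_t/\kappa_t$.
\end{itemize}

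Together with $\eta_t\ge0$ and \eqref{eq:gaps}, this is \eqref{eq:component-main-gap}.

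\textbf{Complexity at fixed resolution.} I would implement the DP as follows. For each component, first bucket its states by rounded excess, keeping the minimum reservation per bucket. The merge with the running table then costs one update per (state, table entry) pair. To reach $O(N^{\mathrm{r}}_t(\kappa_t+1))$ overall, I would charge each state to the $\kappa_t+1$ table entries it updates, which gives $\sum_i|\text{states}_i|(\kappa_t+1)$. The enumeration of states is charged to $G^{\mathrm{g}}_t$.

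\textbf{Doubling refinement.} Set $\tau_t$ to be the right-hand side of \eqref{eq:dual-stop}. I am assuming $\tau_t$ is essentially fixed across refinements; if it depends on the incumbent's assignments, it should be replaced by a positive lower bound on it. Run the DP at $\kappa=1,2,4,\dots$ and stop at the first $\kappa$ with $\nu_tB_t/\kappa\le\tau_t$; the actual certified gap may trigger stopping even earlier. The final resolution therefore satisfies $\kappa\le 2\max\{1,\nu_tB_t/\tau_t\}$. Summing the geometric series, the total work is $O(N^{\mathrm{r}}_t\kappa_{\text{final}})=O(N^{\mathrm{r}}_t(1+\nu_tB_t/\tau_t))$, which is bounded by \eqref{eq:statistical-work} since $\nu_t\ge1$. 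State enumeration is done only once, contributing $G^{\mathrm{g}}_t$.

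\textbf{Main obstacle.} The delicate step is proving that the rounded DP's lower bound is valid while reservations remain exact. Keeping only the minimum reservation per rounded total must not discard a capacity-feasible optimal combination. I would settle this with an exchange argument: replacing a partial selection by another with the same rounded total and smaller reservation preserves feasibility of every completion.
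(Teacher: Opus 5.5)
Your proposal is correct and follows the paper's proof essentially step for step. It uses subset states via Lemma~\ref{lem:subset-original}, a DP over rounded excess costs that stores the minimum exact reservation, the gap $\nu_tB_t/\kappa_t$, one update per retained state and table entry, and a geometric sum over doubling stages; the paper's $\tau_t=\alpha\omega\beta_t\sum_{r\in\cR_t}\min_{g:(r,g)\in\cE_t}\|\boldsymbol{\phi}_{rg,t}\|_{\boldsymbol{V}_t^{-1}}$ is exactly the incumbent-independent positive lower bound you anticipated needing.

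A few things to tidy:
\begin{itemize}
\item Use the floor consistently. Your two bullets already rely on it, whereas with a ceiling the optimum's rounded total can exceed $\kappa_t$ and fall outside the table.
\item Treat $B_t=0$ separately.
\item Note that keeping the best schedule and largest bound across stages only shrinks the gap.
\item Note that starting the doubling at $\kappa=1$ instead of the algorithm's $\kappa=\nu_t$ changes only constants.
\end{itemize}
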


Appendix~\ref{app:compute-proof} provides the proof and parameter definitions. Here $\nu_t$, $N^{\mathrm{r}}_t$ and $G^{\mathrm{g}}_t$ denote the component count, retained-state count and state-generation work, respectively. The resolution $\kappa_t$ is the chosen or last completed resolution. A smaller initial gap $B_t$ or larger tolerance lower bound $\tau_t$ reduces the complexity bound. Centering makes this bound invariant to common cost offsets within each component. Appendix~\ref{app:compute-proof} compares the finite-instance bound \eqref{eq:statistical-work} with exact scheduling to identify when computation can be reduced.

\section{Experimental Evaluation}\label{sec:evidence}\label{sec:experiments}

\subsection{Experimental Settings}\label{sec:exp-settings}
\paragraph{Data and workload.}
We evaluate the proposed framework using workloads built from four datasets---QuALITY, QASPER, MultiRC and RACE \citep{pang2022quality,dasigi2021qasper,khashabi2018multirc,lai2017race}---with Qwen3.5-4B and Qwen3.5-9B serving as the source and target models, respectively.\footnote{This work primarily focuses on the theoretical analysis of quality-aware reuse scheduling. We leave evaluation across different model families and at larger model scales to future work.} Appendix~\ref{app:experimental-protocol} provides more details.

\paragraph{Baselines and metrics.}
We compare \alg{} with four baselines adapted to the reuse problem. UCB-Greedy assigns tasks sequentially using optimistic losses. UCB-LocalSearch improves this schedule through result additions, removals and swaps with joint task reassignment \citep{arya2004local,su2024facility}. Following recent bandit comparisons \citep{kim2024local,pires2025exploitation}, \alg{}-TS uses Gaussian parameter samples, while Point Estimate uses clipped ridge predictions updated online. Both retain \textsc{JAS}. Additional controls include Frozen Point Estimate (a fixed predictor), NU (native execution), and SMU (\alg{} restricted to native execution and target-model summaries). We also compare \textsc{RLS}, \textsc{GLS}, Exact DP and MILP for scheduling efficiency. We report task quality $1-\ell^{\mathrm{o}}$, modeled computation cost, quality-aware cost \eqref{eq:objective}, and regret \eqref{eq:online-problem}, evaluated semi-synthetically with losses sampled from the quality model (Section~\ref{sec:system-model}) fit to real LLM outputs.

\subsection{Decision Quality}\label{sec:exp-results}
\paragraph{Online updates and exploration.}\label{sec:exp-measured}
The results show that QARS achieves the lowest quality-aware cost (90.78), 18.0\% below Frozen Point, supporting online quality learning and exploration. Point Estimate and QARS-TS remain 10.2\% and 10.0\% higher, so both online fitting and optimistic exploration help. SMU without source-summary routes is 10.2\% higher, and native-only NU is 12.7\% higher. Shared preparation lowers its modeled computation cost from 50.1 to 20.5 s/task. Quality-weight experiments are in Appendix~\ref{app:omega-sweep}.

\begin{wraptable}{r}{0.4\textwidth}
\vspace{-20pt}
\centering
\scriptsize
\setlength{\tabcolsep}{2.4pt}
\caption{Average regret per task.}
\label{tab:v9-main-primary}
\begin{tabular}{@{}lccccc@{}}\toprule
 & \alg{} & \alg{}-TS & \shortstack{Point\\Est.} & \shortstack{UCB-\\LocalSearch} & \shortstack{UCB-\\Greedy}\\\midrule
Regret & \textbf{0.524} & 1.070 & 2.167 & 1.450 & 6.084\\
Time & \textbf{0.245} & 0.249 & 0.987 & 1.698 & 0.476\\
\bottomrule\end{tabular}
\end{wraptable}

\paragraph{Coupled scheduling with selected feedback.}
We compare five same-catalogue methods under selected feedback in Table~\ref{tab:v9-main-primary} and Figure~\ref{fig:online-v9-core}. \alg{} achieves the lowest mean regret with low scheduling time, demonstrating effective joint scheduling with selected feedback.

\begin{figure}[ht]\centering
\includegraphics[width=0.75\linewidth]{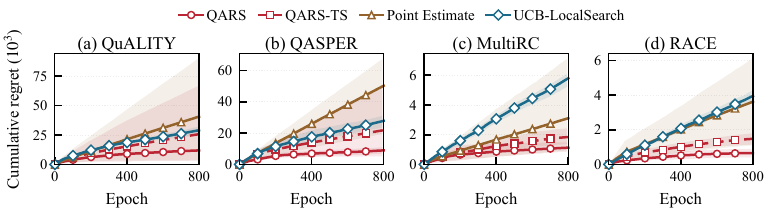}
\vspace{-6pt}
\caption{Cumulative regret across epochs.}
\label{fig:online-v9-core}
\end{figure}
\vspace{-5pt}

\subsection{Computational Efficiency}\label{sec:exp-time}
\paragraph{Quality-aware stopping.}
We evaluate quality-aware stopping in Table~\ref{tab:v9-ablation} and compare scheduling time across deadlines in Figure~\ref{fig:scheduling-time-deadline}. It reduces mean scheduling time by 58.8\% relative to zero tolerance, with comparable regret and quality-aware cost. \alg{}/\alg{}-TS have the shortest scheduling times across all tested deadlines. Appendices~\ref{app:solver-diagnostics} and~\ref{app:resource-sensitivity} provide detailed analyses.

\begin{figure}[ht]\centering
\includegraphics[width=0.75\linewidth]{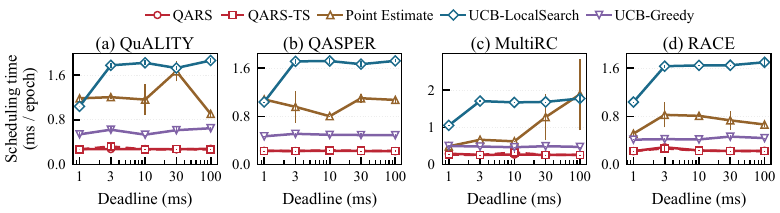}
\vspace{-6pt}
\caption{Scheduling time across deadlines for the five primary methods.}
\label{fig:scheduling-time-deadline}
\end{figure}
\vspace{-4pt}
\begin{wrapfigure}[7]{r}{0.45\textwidth}
\vspace{-10pt}
\centering
\includegraphics[width=\linewidth]{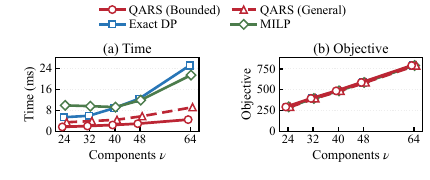}
\vspace{-25pt}
\caption{Time and objective versus $\nu$.}
\label{fig:online-v16-scale}
\label{fig:online-v16-scale-time}
\end{wrapfigure}

\paragraph{Joint adaptive scheduling.}
We compare Bounded (\textsc{RLS}) and General (\textsc{GLS}) with Exact DP and MILP on identical integer-component instances in Figure~\ref{fig:online-v16-scale}. Both \textsc{JAS} procedures reduce mean scheduling time ($2.72$ and $5.37$\,ms versus $11.55$ and $12.42$\,ms), while all four attain the same mean objective ($509.3$). Appendices~\ref{app:component-experiments} and~\ref{app:subset-assignment-exp} provide solver and catalogue comparisons.
\vspace{-4pt}
\section{Conclusion}\label{sec:conclusion}
We study online cross-model computation reuse with shared preparation costs, resource constraints and unknown task-dependent quality. \alg{} jointly chooses result preparations and task assignments, adapts scheduling accuracy to quality uncertainty, and attains $\widetilde O(\sqrt{T})$ 	regret with bounded scheduling work for component workloads. This work focuses on the theoretical framework. Future work will extend QARS to broader quality models, such as generalized linear and neural bandits, and to richer context features such as embeddings, and will evaluate it in larger-scale serving deployments, including KV-level cross-model reuse, more model pairs and end-to-end latency.

\label{zzz:endmain}\typeout{ENDMAINPOS \the\pagetotal\space OF \the\textheight}
\bibliography{references}
\bibliographystyle{conf2027_conference}
\clearpage
\appendix
\numberwithin{theorem}{section}
\numberwithin{proposition}{section}
\numberwithin{lemma}{section}
\numberwithin{corollary}{section}
\numberwithin{remark}{section}
\numberwithin{figure}{section}
\numberwithin{table}{section}

\noindent
\section{Notation}\label{app:notation}
Table~\ref{tab:notation} summarizes the notation.
\begingroup
\small
\setlength{\tabcolsep}{4pt}
\renewcommand{\arraystretch}{1.12}
\begin{longtable}{@{}>{\raggedright\arraybackslash}p{0.2\textwidth}p{0.8\textwidth}@{}}
\caption{Notation.}\label{tab:notation}\\
\toprule
Symbol & Meaning \\
\midrule
\endfirsthead
\multicolumn{2}{@{}l}{\textbf{Table~\thetable\ (continued)}}\\
\toprule
Symbol & Meaning \\
\midrule
\endhead
\midrule
\multicolumn{2}{r}{Continued on the next page}\\
\endfoot
\bottomrule
\endlastfoot

$t,\ T$ & Decision epoch and horizon.\\
$r,\ g,\ m_r^{\mathrm{tgt}},\ m_g^{\mathrm{src}}$ & Task, reusable result, target model, and source model.\\
$\cR_t,\ \cG_t,\ \cE_t$ & Task set, result catalogue, and compatibility set.\\
$\mathsf{s}_t,\ \cA_t(\mathsf{s}_t)$ & Observed state and its feasible set.\\
$\boldsymbol{a}=(\boldsymbol{x},\boldsymbol{y}),\ \boldsymbol{a}_t^0,\ \boldsymbol{a}_t^\star$ & Decision, incumbent, and per-epoch benchmark. $x_{rg}$ assigns task $r$ to result $g$, and $y_g$ selects result $g$.\\
\addlinespace[3pt]
$F_{g,t},\ e_{rg,t},\ \omega$ & Preparation cost, execution cost, and quality weight.\\
$J_t(\boldsymbol{a}),\ J_t^{\pm}(\boldsymbol{a})$ & Expected quality-aware cost, and the optimistic and pessimistic costs $J_t^-$ and $J_t^+$.\\
$R_T,\ \Pi$ & Cumulative regret along the observed states, and the set of online policies.\\
$P,\ \boldsymbol{c}_{g,t},\ \boldsymbol{c}_{rg,t},\ \bar{\boldsymbol{c}}_t$ & Number of resource types, result reservation, task reservation, and capacity. The scalar capacity is $c^{\max}$.\\
$\ell^{\mathrm{o}}_{rg,t},\ \ell_{rg,t},\ \ell_{rg,t}^{\pm}$ & Observed quality loss, expected quality loss, and confidence bounds.\\
$\boldsymbol{\phi}_{rg,t},\ \boldsymbol{\theta}^\star,\ D_\phi,\ S$ & Context vector, unknown parameter, context dimension, and parameter-norm bound, with $\ell_{rg,t}=\boldsymbol{\phi}_{rg,t}^{\top}\boldsymbol{\theta}^\star$.\\
$M,\ D,\ \chi$ & Workload bound, feedback-delay bound, and dependence factor.\\
$\mathcal{H}_t,\ \boldsymbol{V}_t,\ \widehat{\boldsymbol{\theta}}_t$ & Feedback history, observed-context matrix, and regularized estimator.\\
$\lambda_0,\ \sigma,\ \delta,\ \beta_t$ & Regularization, noise scale, failure probability, and confidence radius.\\
\addlinespace[3pt]
$\alpha,\ \tau_t$ & Accuracy factor and tolerance lower bound.\\
$\nu_t,\ (\cR_i,\cG_i),\ K_{\mathrm{c}},\ M_{\mathrm{c}}$ & Component count, component task and result sets, and maximum numbers of results and tasks per component. Within one epoch, $\nu_t$ is also written $\nu$.\\
$\mathcal{Q}_i,\ \mathcal{U},\ B_t,\ \kappa_t$ & Retained component states, a covering result subset, initial gap, and rounding resolution.\\
$N^{\mathrm{r}}_t,\ G^{\mathrm{g}}_t$ & Retained-state count and state-generation work.\\
\addlinespace[3pt]
$\Lambda_T,\ \overline\beta_T,\ H_T,\ \Xi_T$ & Logarithmic potential, confidence-radius bound, directional delay term, and cumulative context uncertainty bound.\\
\end{longtable}
\endgroup

\section{System Model and Problem Formulation}\label{app:model-details}
\subsection{Dependent Feedback Analysis}
\label{app:block-feedback}
Observations sharing a reusable result can be dependent. Blocks of jointly revealed feedback account for this dependence while preserving the linear quality model and regularized estimator.

\paragraph{Feedback blocks and noise conditions.}
Index the blocks in reveal order by $b=1,2,\ldots$. Block $b$ contains $n_b\leq N_{\max}^{\mathrm{blk}}$ observed quality losses from selected assignments, $\boldsymbol{\ell}^{\mathrm{o}}_b\in[0,1]^{n_b}$, with one context vector per row in $\boldsymbol{X}_b\in\R^{n_b\times {D_\phi}}$. Its membership, size and context matrix are measurable with respect to $\mathcal F_{b-1}$, the information available before the block is revealed. Each selected assignment contributes one observation to exactly one block, whose observations are revealed together. The feedback-delay and pending-observation bounds still apply. If batching increases either bound, the corresponding constants must be updated.

The same parameter determines the conditional means, $\mathbb E[\boldsymbol{\ell}^{\mathrm{o}}_b\mid\mathcal F_{b-1}]=\boldsymbol{X}_b\boldsymbol{\theta}^\star$, with $\boldsymbol{\epsilon}_b=\boldsymbol{\ell}^{\mathrm{o}}_b-\boldsymbol{X}_b\boldsymbol{\theta}^\star$. For a noise scale $\sigma>0$ and a dependence factor $\chi\geq1$, every $\boldsymbol{v}\in\R^{n_b}$ satisfies
\begin{equation}
 \mathbb E\!\left[\exp(\boldsymbol{v}^\top\boldsymbol{\epsilon}_b)\mid\mathcal F_{b-1}\right]
 \leq\exp\!\left(\frac{\sigma^2\chi}{2}\|\boldsymbol{v}\|_2^2\right).
 \label{eq:block-mgf}
\end{equation}
This conditional moment-generating-function (MGF) bound is stronger than a covariance bound. It allows dependence within blocks and a random reveal order, but need not hold when the next block is selected using its unrevealed outcomes.

Let $n_{\mathrm{blk}}(t)$ be the number of blocks revealed before epoch $t$. The regularized estimator is
\begin{equation}
 \boldsymbol{V}_t=\lambda_0\boldsymbol{I}+\sum_{b=1}^{n_{\mathrm{blk}}(t)}\boldsymbol{X}_b^\top \boldsymbol{X}_b,\qquad
 \widehat{\boldsymbol{\theta}}_t=\boldsymbol{V}_t^{-1}\sum_{b=1}^{n_{\mathrm{blk}}(t)}\boldsymbol{X}_b^\top \boldsymbol{\ell}^{\mathrm{o}}_b.
 \label{eq:block-estimator}
\end{equation}
Each revealed observation contributes once to the estimator, which excludes pending observations. With $\Lambda_T$ and $\Xi_T$ defined in Appendix~\ref{app:directional-proof}, the confidence radius and its uniform upper bound are
\begin{align}
 \beta_t^{(\chi)}
 &=\sigma\sqrt{\chi}\sqrt{\log\frac{\det \boldsymbol{V}_t}{\det(\lambda_0\boldsymbol{I})}
                                      +2\log(1/\delta)}
                                      +\sqrt{\lambda_0}S,\nonumber\\
 \overline\beta_T^{(\chi)}
 &=\sigma\sqrt{\chi}\sqrt{\Lambda_T+2\log(1/\delta)}
                                      +\sqrt{\lambda_0}S.
 \label{eq:block-radius}
\end{align}

Predictability, conditional means and the MGF bound replace Assumption~\ref{ass:online}. The linear quality model in Section~\ref{sec:system-model} is unchanged. Confidence requires $\lambda_0>0$, and Theorem~\ref{thm:directional-delay} requires $\lambda_0\geq\max\{1,M\}$.

\begin{proposition}[Block feedback]
\label{prop:block-feedback}
Under these conditions, the estimator~\eqref{eq:block-estimator} satisfies, with probability at least $1-\delta$, simultaneously for every epoch $t$,
\begin{equation}
 \|\widehat{\boldsymbol{\theta}}_t-\boldsymbol{\theta}^\star\|_{\boldsymbol{V}_t}\leq\beta_t^{(\chi)}.
 \label{eq:block-confidence}
\end{equation}
\end{proposition}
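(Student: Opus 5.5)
The plan is to adapt the self-normalized martingale argument of \citet{abbasi2011linear} to vector-valued blocks, treating each revealed block as one martingale increment. Write $\boldsymbol{S}_n=\sum_{b=1}^{n}\boldsymbol{X}_b^\top\boldsymbol{\epsilon}_b$ and $\bar{\boldsymbol{V}}_n=\lambda_0\boldsymbol{I}+\sum_{b\leq n}\boldsymbol{X}_b^\top\boldsymbol{X}_b$, indexed by block count $n$ rather than by epoch. Substituting $\boldsymbol{\ell}^{\mathrm{o}}_b=\boldsymbol{X}_b\boldsymbol{\theta}^\star+\boldsymbol{\epsilon}_b$ into \eqref{eq:block-estimator} gives
\begin{equation*}
\widehat{\boldsymbol{\theta}}_t-\boldsymbol{\theta}^\star=\bar{\boldsymbol{V}}_n^{-1}\boldsymbol{S}_n-\lambda_0\bar{\boldsymbol{V}}_n^{-1}\boldsymbol{\theta}^\star,\qquad n=n_{\mathrm{blk}}(t).
\end{equation*}
By the triangle inequality,
\begin{equation*}
\|\widehat{\boldsymbol{\theta}}_t-\boldsymbol{\theta}^\star\|_{\bar{\boldsymbol{V}}_n}\leq\|\boldsymbol{S}_n\|_{\bar{\boldsymbol{V}}_n^{-1}}+\sqrt{\lambda_0}S,
\end{equation*}
where the second term uses $\lambda_0^2\boldsymbol{\theta}^{\star\top}\bar{\boldsymbol{V}}_n^{-1}\boldsymbol{\theta}^\star\leq\lambda_0\|\boldsymbol{\theta}^\star\|_2^2$. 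It therefore suffices to show that, with probability at least $1-\delta$, for all $n\geq0$ simultaneously,
\begin{equation*}
\|\boldsymbol{S}_n\|^2_{\bar{\boldsymbol{V}}_n^{-1}}\leq\sigma^2\chi\bigl(\log(\det\bar{\boldsymbol{V}}_n/\det\lambda_0\boldsymbol{I})+2\log(1/\delta)\bigr).
\end{equation*}
Because $\boldsymbol{V}_t=\bar{\boldsymbol{V}}_{n_{\mathrm{blk}}(t)}$, this uniform-in-$n$ statement immediately covers every epoch $t$.

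\textbf{Supermartingale step.} For $\boldsymbol{u}\in\R^{D_\phi}$, define
\begin{equation*}
M_n(\boldsymbol{u})=\exp\Bigl(\boldsymbol{u}^\top\boldsymbol{S}_n-\tfrac{\sigma^2\chi}{2}\textstyle\sum_{b\leq n}\|\boldsymbol{X}_b\boldsymbol{u}\|_2^2\Bigr).
\end{equation*}
Since $\boldsymbol{X}_b$ is $\mathcal F_{b-1}$-measurable, we can apply \eqref{eq:block-mgf} with $\boldsymbol{v}=\boldsymbol{X}_b\boldsymbol{u}$. This gives $\mathbb E[M_n(\boldsymbol{u})\mid\mathcal F_{n-1}]\leq M_{n-1}(\boldsymbol{u})$, so $M_n(\boldsymbol{u})$ is a nonnegative supermartingale with $M_0=1$. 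This is the only place the block dependence enters: the scalar sub-Gaussian step of the original argument is replaced by the vector MGF bound, and its effect is simply to rescale the variance by $\chi$.

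\textbf{Mixture and stopping.} Next we apply the method of mixtures. Integrate $M_n(\boldsymbol{u})$ against a Gaussian prior $\boldsymbol{u}\sim\mathcal N(0,(\sigma^2\chi\lambda_0)^{-1}\boldsymbol{I})$. Completing the square should give
\begin{equation*}
\bar M_n=\Bigl(\tfrac{\det\lambda_0\boldsymbol{I}}{\det\bar{\boldsymbol{V}}_n}\Bigr)^{1/2}\exp\Bigl(\tfrac{1}{2\sigma^2\chi}\|\boldsymbol{S}_n\|^2_{\bar{\boldsymbol{V}}_n^{-1}}\Bigr).
\end{equation*}
By Fubini/Tonelli, $\bar M_n$ is again a supermartingale with $\mathbb E[\bar M_n]\leq1$. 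To obtain uniformity over the random block count, take the stopping time $\tau$ equal to the first $n$ at which the bound fails. Fatou's lemma gives $\mathbb E[\bar M_\tau]\leq1$, and Markov's inequality then gives $\Pr(\bar M_\tau\geq1/\delta)\leq\delta$; rearranging yields the displayed inequality for all $n$ at once.

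The main obstacle is this last step. Block sizes, memberships and reveal order are random, so the time index must be the block filtration $\mathcal F_b$, not epochs. Each $\mathcal F_{b-1}$-measurability requirement has to be checked, and the case of infinitely many blocks must be handled via the stopped-process/Fatou argument, as in \citet{abbasi2011linear}, Theorem 1. Once that is done, the conversion from blocks to epochs through $\boldsymbol{V}_t=\bar{\boldsymbol{V}}_{n_{\mathrm{blk}}(t)}$ is immediate.
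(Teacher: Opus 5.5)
Your proposal is correct and follows the paper's proof essentially step for step. It uses the same split of $\widehat{\boldsymbol{\theta}}_t-\boldsymbol{\theta}^\star$ into a noise term and a regularization term bounded by $\sqrt{\lambda_0}S$, the same exponential supermartingale obtained by taking $\boldsymbol{v}=\boldsymbol{X}_b\boldsymbol{u}$ in the block MGF bound, and the same Gaussian mixture with covariance $(\sigma^2\chi\lambda_0)^{-1}\boldsymbol{I}$. The only difference is cosmetic: you get uniformity over the block count through the stopping-time and Fatou argument of Abbasi-Yadkori et al., while the paper invokes Ville's maximal inequality directly, which gives the same result.
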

The dependence factor scales only the noise term in the confidence radius, leaving the estimator unchanged. On the same event, the confidence bounds yield the regret bound
\begin{equation}
 R_T\leq\sum_{t=1}^T\eta_t
       +2\omega\overline\beta_T^{(\chi)}\Xi_T.
 \label{eq:block-regret}
\end{equation}
If every nonempty epoch satisfies \eqref{eq:dual-stop} with $\beta_t^{(\chi)}$ in place of $\beta_t$,
\begin{equation}
 R_T\leq(2+\alpha)\omega\overline\beta_T^{(\chi)}
                           \Xi_T.
 \label{eq:block-stopping-regret}
\end{equation}

\begin{proof}
The effective noise scale is $\sigma_0=\sigma\sqrt{\chi}>0$. Let $\boldsymbol{A}_n=\sum_{b\leq n}\boldsymbol{X}_b^\top \boldsymbol{X}_b$ and $\boldsymbol{m}_n=\sum_{b\leq n}\boldsymbol{X}_b^\top\boldsymbol{\epsilon}_b$. Predictability of $\boldsymbol{X}_b$ allows $\boldsymbol{v}=\boldsymbol{X}_b\boldsymbol{u}$ in \eqref{eq:block-mgf}, so
\[
 z_n^{\mathrm{exp}}(\boldsymbol{u})=\exp\!\left(\boldsymbol{u}^\top \boldsymbol{m}_n-\frac{\sigma_0^2}{2}\boldsymbol{u}^\top \boldsymbol{A}_n\boldsymbol{u}\right)
\]
is a nonnegative supermartingale starting at one. Integrate this process against the Gaussian density $\boldsymbol{u}\sim N(\boldsymbol{0},(\sigma_0^2\lambda_0)^{-1}\boldsymbol{I})$, fixed before any feedback is revealed. Conditional Tonelli's theorem preserves the supermartingale property, and Gaussian integration gives the mixture
\[
 z_n^{\mathrm{mix}}=\left(\frac{\det(\lambda_0\boldsymbol{I})}{\det(\lambda_0\boldsymbol{I}+\boldsymbol{A}_n)}\right)^{1/2}
       \exp\!\left(\frac{\|\boldsymbol{m}_n\|_{(\lambda_0\boldsymbol{I}+\boldsymbol{A}_n)^{-1}}^2}
                              {2\sigma_0^2}\right).
\]
Ville's inequality then gives, with probability at least $1-\delta$, for every $n$,
\[
 \|\boldsymbol{m}_n\|_{(\lambda_0\boldsymbol{I}+\boldsymbol{A}_n)^{-1}}
 \leq\sigma_0\sqrt{\log\frac{\det(\lambda_0\boldsymbol{I}+\boldsymbol{A}_n)}{\det(\lambda_0\boldsymbol{I})}
                                      +2\log(1/\delta)}.
\]
At epoch $t$, take $n=n_{\mathrm{blk}}(t)$. Then $\widehat{\boldsymbol{\theta}}_t-\boldsymbol{\theta}^\star =\boldsymbol{V}_t^{-1}(\boldsymbol{m}_{n_{\mathrm{blk}}(t)}-\lambda_0\boldsymbol{\theta}^\star)$. Since $\boldsymbol{V}_t\succeq\lambda_0\boldsymbol{I}$, the triangle inequality bounds the regularization contribution by $\sqrt{\lambda_0}S$, proving \eqref{eq:block-confidence}.

Clipping preserves coverage of the expected quality loss. Appendix~\ref{app:directional-proof} uses only the selected contexts and reveal indicators, so the new radius yields \eqref{eq:block-regret} and \eqref{eq:block-stopping-regret}.
\end{proof}

\paragraph{Choosing the dependence factor.}
Suppose each observation lies in $[0,1]$ and has its stated conditional mean. The conditional range of $\boldsymbol{v}^\top\boldsymbol{\ell}^{\mathrm{o}}_b$ then has length at most $\|\boldsymbol{v}\|_1$, and conditional Hoeffding's lemma with Cauchy--Schwarz gives
\begin{equation}
 \mathbb E\!\left[e^{\boldsymbol{v}^\top\boldsymbol{\epsilon}_b}\mid\mathcal F_{b-1}\right]
 \leq\exp\!\left(\frac{\|\boldsymbol{v}\|_1^2}{8}\right)
 \leq\exp\!\left(\frac{N_{\max}^{\mathrm{blk}}\|\boldsymbol{v}\|_2^2}{8}\right).
 \label{eq:block-bounded}
\end{equation}
Thus $\sigma=1/2$ and $\chi=N_{\max}^{\mathrm{blk}}$ satisfy \eqref{eq:block-mgf} under arbitrary dependence within a block. Conditional independence given $\mathcal F_{b-1}$ allows $\chi=1$. An intermediate factor requires a conditional MGF bound, which sample correlation alone cannot establish.

\subsection{Proof of Computational Complexity}
\label{app:hardness-proof}
\begin{proof}[Proof of Proposition~\ref{prop:epoch-hard}]
We reduce Set Cover with $N_{\mathrm{set}}$ sets to $\mathcal{P}_t$. Given a universe $\{e_1,\dots,e_{N_{\mathrm{el}}}\}$ and sets $\mathcal S_1,\dots,\mathcal S_{N_{\mathrm{set}}}$ whose union is the universe, construct one epoch with a task $r_j$ per element and a result $g_i$ per set. Set $F_{g_i}=1$, $e_{r_jg_i}=0$, $\ell_{r_jg_i}=0$ and $\boldsymbol{c}_{g_i}=\boldsymbol{c}_{r_jg_i}=\boldsymbol{0}$, and let $(r_j,g_i)\in\cE$ exactly when $e_j\in\mathcal S_i$. Choose $\bar{\boldsymbol{c}}$ large enough that \eqref{eq:capacity} never binds; activating all set results is a feasible fallback because the sets cover the universe. Add a native option $g_0$ compatible with every task, with $F_{g_0}=0$, $e_{r_jg_0}=N_{\mathrm{set}}+1$, zero quality loss and zero reservations. Any schedule using it costs more than a schedule induced by all sets, of cost at most $N_{\mathrm{set}}$, so no optimum uses native execution.

A feasible schedule without native assignments consists of the selected set $\mathcal U=\{i:y_{g_i}=1\}$ together with an assignment of each task to a selected compatible result. By \eqref{eq:assignment} and \eqref{eq:preparation} such an assignment exists if and only if $\mathcal U$ covers the universe, and by construction $J(\boldsymbol{a})=|\mathcal U|$. The optimum of $\mathcal{P}_t$ therefore equals the minimum cover size. Since Set Cover is NP-hard, so is $\mathcal{P}_t$.
\end{proof}

\subsection{Reassigning Tasks Together}
\label{app:coalition-proof}
The following instance illustrates the joint savings discussed in Section~\ref{sec:problem}. Expected quality losses are zero and capacity is unconstrained, so costs reflect only preparation and execution. There are $n\geq 2$ tasks. Each native option has preparation cost $0$ and execution cost $1$, and one reusable result has preparation cost $3/2$ and execution cost $0$. The all-native schedule costs $n$. Reassigning one task to the reusable result raises the cost by $1/2$, so this individual reassignment offers no benefit. Reassigning all $n$ tasks together costs $3/2<n$, making reuse worthwhile by sharing the preparation cost.

\section{Quality Learning}\label{app:onlineproof}

\subsection{Simultaneous Confidence Bounds}\label{app:confidence-basic}
The regularized estimator uses only revealed quality feedback from selected assignments. Index these observations in reveal order by $i=1,2,\ldots$, with context vector $\boldsymbol{\phi}_i$ and observed quality loss $\ell^{\mathrm{o}}_i$. Let $\mathcal F_{i-1}$ be the information available before $\ell^{\mathrm{o}}_i$ is revealed, including $\boldsymbol{\phi}_i$. By Assumption~\ref{ass:online}, the residual $\varepsilon_i=\ell^{\mathrm{o}}_i-\boldsymbol{\phi}_i^\top\boldsymbol{\theta}^\star$ satisfies
\[
 \mathbb E[\varepsilon_i\mid\mathcal F_{i-1}]=0,\qquad
 \mathbb E[e^{u\varepsilon_i}\mid\mathcal F_{i-1}]
 \leq e^{\sigma^2u^2/2}\quad (u\in\mathbb R).
\]
These conditions are imposed in reveal order. A feedback-delay bound alone does not ensure them when that order depends on unrevealed quality losses.

After $n$ observations have been revealed, the cumulative noise vector and observed-context matrix are
\[
 \boldsymbol{m}_n=\sum_{i=1}^n\boldsymbol{\phi}_i\varepsilon_i,\qquad
 \boldsymbol{V}^{(n)}=\lambda_0\boldsymbol{I}+\sum_{i=1}^n\boldsymbol{\phi}_i\boldsymbol{\phi}_i^\top.
\]
For any $\lambda_0>0$, the uniform self-normalized inequality of \citet{abbasi2011linear} gives, with probability at least $1-\delta$,
\begin{equation}
 \|\boldsymbol{m}_n\|_{(\boldsymbol{V}^{(n)})^{-1}}
 \leq\sigma\sqrt{\log\frac{\det \boldsymbol{V}^{(n)}}{\det(\lambda_0 \boldsymbol{I})}
                         +2\log(1/\delta)}
 \quad\text{for every }n\geq0.
 \label{eq:scalar-noise-bound}
\end{equation}
At epoch $t$, $n_{\mathrm{o}}(t)=|\mathcal H_t|$ and $\boldsymbol{V}_t=\boldsymbol{V}^{(n_{\mathrm{o}}(t))}$. Substituting into \eqref{eq:estimator} gives
\[
 \widehat{\boldsymbol{\theta}}_t-\boldsymbol{\theta}^\star
 =\boldsymbol{V}_t^{-1}\bigl(\boldsymbol{m}_{n_{\mathrm{o}}(t)}-\lambda_0\boldsymbol{\theta}^\star\bigr).
\]
Since $\boldsymbol{V}_t\succeq\lambda_0\boldsymbol{I}$, the triangle inequality yields
\begin{align*}
 \|\widehat{\boldsymbol{\theta}}_t-\boldsymbol{\theta}^\star\|_{\boldsymbol{V}_t}
 &\leq\|\boldsymbol{m}_{n_{\mathrm{o}}(t)}\|_{\boldsymbol{V}_t^{-1}}
       +\lambda_0\|\boldsymbol{\theta}^\star\|_{\boldsymbol{V}_t^{-1}}\\
 &\leq\sigma\sqrt{\log\frac{\det \boldsymbol{V}_t}{\det(\lambda_0 \boldsymbol{I})}
                         +2\log(1/\delta)}+\sqrt{\lambda_0}S
 =\beta_t.
\end{align*}
On this event, denoted $\mathcal E_{\mathrm{conf}}$, weighted Cauchy--Schwarz gives, for every compatible pair,
\[
 \bigl|\boldsymbol{\phi}_{rg,t}^\top
    (\widehat{\boldsymbol{\theta}}_t-\boldsymbol{\theta}^\star)\bigr|
 \leq\beta_t\|\boldsymbol{\phi}_{rg,t}\|_{\boldsymbol{V}_t^{-1}}.
\]
Since $\ell_{rg,t}\in[0,1]$, clipping to this range preserves coverage of the expected quality loss:
\begin{equation}
 \|\widehat{\boldsymbol{\theta}}_t-\boldsymbol{\theta}^\star\|_{\boldsymbol{V}_t}\leq\beta_t,
 \qquad \ell_{rg,t}^-\leq\ell_{rg,t}\leq\ell_{rg,t}^+,
 \qquad 0\leq\ell_{rg,t}-\ell_{rg,t}^-
       \leq2\beta_t\|\boldsymbol{\phi}_{rg,t}\|_{\boldsymbol{V}_t^{-1}}.
 \label{eq:confidenceproof}
\end{equation}
Before clipping, the lower bound lies at most twice the radius below the conditional mean, and clipping at zero can only reduce this difference. The event holds for every $n$ and hence every epoch, without a further union bound. Confidence requires only $\lambda_0>0$. Theorem~\ref{thm:directional-delay} uses $\lambda_0\geq\max\{1,M\}$ to bound cumulative uncertainty over selected assignments.

\subsection{Centered Confidence Bounds for Routes}
\label{app:v7-confidence}
Let $N_{\mathrm{rt}}$ be the number of routes, with ${D_\phi}=2N_{\mathrm{rt}}$ and two coordinates per route. A compatible pair's context has two route coordinates $\boldsymbol{\psi}=(1-\xi,\xi)$ for $\xi\in[0,1]$, with all other coordinates zero. The primary experiments use four routes, so $N_{\mathrm{rt}}=4$. The additional endpoint assumption $\boldsymbol{\theta}_k^\star\in[0,1]^2$, with known center $\boldsymbol{\theta}_k^0=(1/2,1/2)$, gives $\|\boldsymbol{\theta}_k^\star-\boldsymbol{\theta}_k^0\|_2\leq1/\sqrt2$. Let $\mathcal H_{k,t}$ contain the records in $\mathcal H_t$ on route $k$. For any $\lambda_0>0$, the observed-context matrix, centered estimator, and confidence radius on this route are
\begin{align}
 \boldsymbol{V}_{k,t}&=\lambda_0\boldsymbol{I}_2+\sum_{i\in\mathcal H_{k,t}}\boldsymbol{\psi}_i\boldsymbol{\psi}_i^\top,\nonumber\\
 \widehat{\boldsymbol{\theta}}_{k,t}
 &=\boldsymbol{\theta}_k^0+\boldsymbol{V}_{k,t}^{-1}\sum_{i\in\mathcal H_{k,t}}\boldsymbol{\psi}_i(\ell^{\mathrm{o}}_i-1/2),\nonumber\\
 \beta_{k,t}&=\sigma\sqrt\chi
 \sqrt{\log\frac{\det \boldsymbol{V}_{k,t}}{\lambda_0^2}+2\log(N_{\mathrm{rt}}/\delta)}
 +\sqrt{\lambda_0/2}.
 \label{eq:v7-route-confidence}
\end{align}
Then $\boldsymbol{V}_t=\operatorname{diag}(\boldsymbol{V}_{1,t},\ldots,\boldsymbol{V}_{N_{\mathrm{rt}},t})$, and $\widehat{\boldsymbol{\theta}}_t$ concatenates the route estimators. This centered estimator replaces \eqref{eq:estimator}. Predictability, conditional linear means, and the vector-MGF condition~\eqref{eq:block-mgf} still apply. Routes index parameter coordinates, not blocks of jointly revealed feedback. The Gaussian-mixture argument of Appendix~\ref{app:block-feedback}, applied to each route with failure probability $\delta/N_{\mathrm{rt}}$, retains the noise scale $\sigma\sqrt\chi$ and bounds the regularization term by $\sqrt{\lambda_0/2}$. The resulting union bound does not require independence across routes. It gives an event $\mathcal E_{\mathrm{route}}$ of probability at least $1-\delta$ on which $\|\widehat{\boldsymbol{\theta}}_{k,t}-\boldsymbol{\theta}_k^\star\|_{\boldsymbol{V}_{k,t}}\leq\beta_{k,t}$ for every route and epoch.

For a compatible pair $(r,g)$ on route $k$, write $w_{rg,t}=\beta_{k,t}\|\boldsymbol{\phi}_{rg,t}\|_{\boldsymbol{V}_t^{-1}}$. Clip $\boldsymbol{\phi}_{rg,t}^\top\widehat{\boldsymbol{\theta}}_t\pm w_{rg,t}$ to $[0,1]$. For each coordinate $j$ on route $k$, also clip $\widehat\theta_{j,t}\pm\beta_{k,t}\sqrt{(\boldsymbol{V}_t^{-1})_{jj}}$ to $[0,1]$, collecting the bounds in $\underline{\boldsymbol{\theta}}_t$ and $\overline{\boldsymbol{\theta}}_t$. Intersect the prediction interval with $[\boldsymbol{\phi}_{rg,t}^\top\underline{\boldsymbol{\theta}}_t,\boldsymbol{\phi}_{rg,t}^\top\overline{\boldsymbol{\theta}}_t]$ to obtain $\ell_{rg,t}^-,\ell_{rg,t}^+$. On $\mathcal E_{\mathrm{route}}$, both intervals contain the conditional mean: nonnegative context coordinates preserve the endpoint bounds in the second interval. The intersection is nonempty, and $0\leq\ell_{rg,t}-\ell_{rg,t}^-\leq2w_{rg,t}$. Stopping uses the selected widths $w_{rg,t}$ in place of $\beta_t\|\boldsymbol{\phi}_{rg,t}\|_{\boldsymbol{V}_t^{-1}}$, preserving the excess-cost argument. The primary confirmation experiments use this route parameterization.

\section{Component Scheduling}\label{app:structured-proofs}
\subsection{Rounded Local Scheduling}\label{sec:component-algorithm}\label{app:component-exact}\label{app:rounding-proof}\label{app:component-regime-proof}
Under Assumption~\ref{ass:components}, omit epoch subscripts and write $c_g$ and ${c^{\max}}$ for the scalar result reservation and capacity. Each covering result subset $\mathcal U\subseteq\cG_i$ defines a state by assigning tasks to their cheapest compatible options under optimistic costs,
\begin{equation}
 J_i(\mathcal U)=\sum_{g\in \mathcal U}F_g+
     \sum_{r\in \cR_i}\min_{g\in \mathcal U:(r,g)\in\cE}\widetilde e_{rg},
 \qquad c_i(\mathcal U)=\sum_{g\in \mathcal U}c_g.
 \label{eq:component-state}
\end{equation}
Remove unused results and update each state's cost and reservation. Discard duplicate states and states with $c_i(\mathcal U)>{c^{\max}}$ to obtain the retained states $\mathcal Q_i$. These states preserve the integral optimum $m^-=\min_{\boldsymbol{a}\in\cA(\mathsf{s})}J^-(\boldsymbol{a})$ (Lemma~\ref{lem:subset-original}). The component minimum is $\underline J_i=\min_{\mathcal U\in\mathcal Q_i}J_i(\mathcal U)$. For an incumbent with optimistic cost $U_0$, the initial lower bound is $L_0=\sum_i\underline J_i$ and the initial gap is $B=U_0-L_0$. The retained-state count is $N^{\mathrm{r}}=\sum_i|\mathcal Q_i|$, and $G^{\mathrm{g}}$ bounds state-generation work. Dynamic programming selects one state per component under the shared capacity, rounding costs above component minima and retaining the minimum exact reservation per rounded total. The feasible schedule and global lower bound differ by at most $\nu B/\kappa$ (Lemma~\ref{thm:component-solver}).

\begin{algorithm}[H]
\caption{Rounded Local Scheduling (\textsc{RLS})}\label{alg:rls}
\KwIn{Observed state $\mathsf{s}$, optimistic assignment costs $\widetilde e_{rg}$,
feasible incumbent $\boldsymbol{a}^0$, $\boldsymbol{V},\beta$; $\alpha,\omega>0$; optional deadline.}
\KwOut{Feasible decision $\boldsymbol{a}^0$ and valid global lower bound $L^-$.}
Generate retained states $\mathcal Q_i$ once by \eqref{eq:component-state}\;
$\underline J_i\leftarrow\min_{\mathcal U\in\mathcal Q_i}J_i(\mathcal U)$\;
$U_0\leftarrow J^-(\boldsymbol{a}^0)$, $L_0\leftarrow\sum_i\underline J_i$,
$B\leftarrow U_0-L_0$, $L^-\leftarrow L_0$, $\kappa\leftarrow\nu$\;
\While{\eqref{eq:dual-stop} fails and time remains}{
$h\leftarrow B/\kappa$\;
$q_i(\mathcal U)\leftarrow\lfloor(J_i(\mathcal U)-\underline J_i)/h\rfloor$ for each $\mathcal U\in\mathcal Q_i$\;
Compute $f_i^{\mathrm{res}}(z)$ by \eqref{eq:rounded-component-dp}\;
$z_*\leftarrow\min\{z\leq \kappa:f_\nu^{\mathrm{res}}(z)\leq {c^{\max}}\}$\;
Reconstruct $\boldsymbol{a}_\kappa$ from $z_*$ and set $L_\kappa\leftarrow L_0+hz_*$\;
$\boldsymbol{a}^0\leftarrow\arg\min_{\boldsymbol{a}\in\{\boldsymbol{a}^0,\boldsymbol{a}_\kappa\}}J^-(\boldsymbol{a})$\;
$L^-\leftarrow\max\{L^-,L_\kappa\}$, $\kappa\leftarrow 2\kappa$\;
}
\Return{$\boldsymbol{a}^0,L^-$}\;
\end{algorithm}

For a component with ${K_{\mathrm{c}}}$ results and ${M_{\mathrm{c}}}$ tasks, there are at most $2^{K_{\mathrm{c}}}-1$ nonempty result subsets, compared with at most ${K_{\mathrm{c}}}^{M_{\mathrm{c}}}$ assignments.
\begin{lemma}[Exact subset representation]
\label{lem:subset-original}
Under Assumption~\ref{ass:components},
\[
 \min_{\{\mathcal U_i\}:\,\sum_i c_i(\mathcal U_i)\leq {c^{\max}}}\sum_iJ_i(\mathcal U_i)=m^-.
\]
\end{lemma}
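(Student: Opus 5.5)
We prove the equality by establishing both inequalities. Throughout we use the structure from Assumption~\ref{ass:components}. The compatibility graph splits into disjoint components $(\cR_i,\cG_i)$. There is a single scalar capacity, reservations $c_g\geq 0$ are charged only for selected results, and task reservations are zero. Under these conditions the optimistic cost $J^-(\boldsymbol{a})$ decomposes as a sum over components, with each component contributing $\sum_{g\in\cG_i}F_gy_g+\sum_{(r,g)\in\cE,\,r\in\cR_i}\widetilde e_{rg}x_{rg}$. The capacity constraint~\eqref{eq:capacity} likewise reduces to $\sum_i\sum_{g\in\cG_i}c_gy_g\leq c^{\max}$.

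\emph{Step 1 ($\geq$).} Take any collection $\{\mathcal U_i\}$ with $\sum_ic_i(\mathcal U_i)\leq c^{\max}$, where each $\mathcal U_i$ is a covering subset, as retained states are. We build a schedule from it:
\begin{itemize}[leftmargin=*]
\item set $y_g=1$ exactly for $g\in\bigcup_i\mathcal U_i$;
\item assign each task $r\in\cR_i$ to a minimizer of $\widetilde e_{rg}$ over $g\in\mathcal U_i$ with $(r,g)\in\cE$.
\end{itemize}
Covering guarantees such a minimizer exists, so \eqref{eq:assignment} holds. Constraint \eqref{eq:preparation} holds by construction. Capacity is exactly $\sum_ic_i(\mathcal U_i)\leq c^{\max}$, and binarity is immediate. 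The cost of this schedule is $\sum_iJ_i(\mathcal U_i)$, so the left-hand minimum is at least $m^-$. If we instead range over the pruned, deduplicated states $\mathcal Q_i$, removing unused results only lowers $F$ and $c$ while keeping the same assignment costs. So this direction is unaffected.

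\emph{Step 2 ($\leq$).} Let $\boldsymbol{a}^\star=(\boldsymbol{x},\boldsymbol{y})$ attain $m^-$; it exists because the feasible set is finite and nonempty via the fallback. Define $\mathcal U_i=\{g\in\cG_i: x_{rg}=1\text{ for some }r\in\cR_i\}$, the set of results actually used. By \eqref{eq:assignment}, every task in $\cR_i$ is compatible with some $g\in\mathcal U_i$, so $\mathcal U_i$ covers $\cR_i$. Reservations are nonnegative and $\mathcal U_i\subseteq\{g:y_g=1\}$ by \eqref{eq:preparation}, so
\[
\sum_ic_i(\mathcal U_i)\leq\sum_gc_gy_g\leq c^{\max}.
\]
Since $F_g\geq0$ and each task's cost is at least the cheapest compatible option in $\mathcal U_i$, we get $J_i(\mathcal U_i)\leq$ the component-$i$ part of $J^-(\boldsymbol{a}^\star)$. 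Summing over components gives $\sum_iJ_i(\mathcal U_i)\leq m^-$. Each $\mathcal U_i$ is a nonempty used subset whenever $\cR_i\neq\varnothing$. Pruning unused results from it does not increase cost or reservation, and the result survives the capacity filter. So it is, or is dominated by, a retained state in $\mathcal Q_i$.

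\emph{Main obstacle.} The delicate point is bookkeeping rather than depth. Three things must be checked:
\begin{itemize}[leftmargin=*]
\item the component decomposition of both the objective and the capacity, which relies on reservations being charged only to used results and task reservations being zero;
\item that the pruning and deduplication in constructing $\mathcal Q_i$ never discards a state needed for the optimum;
\item that selected-but-unused results ($y_g=1$ with no assignment) only add nonnegative cost and reservation, and so can be dropped.
\end{itemize}
Empty components contribute zero on both sides and should be treated as such.
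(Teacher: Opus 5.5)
Your proof is correct and uses the same two-direction argument as the paper. In one direction, a covering state selection yields a feasible schedule by assigning each task to its cheapest compatible option; in the other, an optimal schedule (the paper uses any feasible schedule) yields the state selection of used results, whose cost and reservation do not increase. Your version only spells out more explicitly the component decomposition and why pruning, deduplication and the capacity filter discard no needed state.
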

\begin{proof}
For any feasible schedule, remove unused selected results without increasing cost or reservations, and let $\mathcal U_i$ contain the remaining results in component $i$. Assigning each task to its cheapest compatible result in $\mathcal U_i$ cannot increase its optimistic assignment cost, so the state optimum is at most $m^-$. Conversely, every feasible state selection gives compatible minimum-cost assignments. Removing unused results cannot increase preparation cost or reservations, so the resulting schedule is feasible and costs at most the state objective. The two optima are therefore equal. Discarding duplicate states and states with $c_i(\mathcal U)>{c^{\max}}$ preserves this value.
\end{proof}

\begin{lemma}[Component scheduling]
\label{thm:component-solver}
Under Assumption~\ref{ass:components}, every integer $\kappa\geq1$ yields a feasible $\widehat{\boldsymbol{a}}$ and global lower bound $L_\kappa$ satisfying
\begin{equation}
 L_\kappa\leq m^-\leq J^-(\widehat{\boldsymbol{a}}),\qquad
 J^-(\widehat{\boldsymbol{a}})-L_\kappa\leq\frac{\nu B}{\kappa},
 \label{eq:rounded-gap}
\end{equation}
in $O(G^{\mathrm{g}}+N^{\mathrm{r}}(\kappa+1))$ arithmetic operations.
\end{lemma}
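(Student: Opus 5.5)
The plan is to analyse one pass of the rounded dynamic program in Algorithm~\ref{alg:rls} at a fixed integer $\kappa\geq1$, with $h=B/\kappa$. By Lemma~\ref{lem:subset-original}, it suffices to work with state selections $\{\mathcal U_i\}$, $\mathcal U_i\in\mathcal Q_i$, subject to $\sum_i c_i(\mathcal U_i)\leq c^{\max}$. I also record that any such selection converts into a feasible schedule of optimistic cost at most $\sum_i J_i(\mathcal U_i)$.

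\textbf{Degenerate case.} If $B=0$, the incumbent attains $L_0\leq m^-$, so it is optimal. I then return it with $L_\kappa=L_0$, which gives gap $0$. Assume $B>0$ from here on.

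\textbf{The recursion.} I define the DP behind \eqref{eq:rounded-component-dp} as
\[
f_i^{\mathrm{res}}(z)=\min\Bigl\{\textstyle\sum_{j\leq i}c_j(\mathcal U_j):\ \sum_{j\leq i}q_j(\mathcal U_j)=z\Bigr\}
\]
for $0\leq z\leq\kappa$, with value $+\infty$ if no selection has rounded total $z$. It is computed by
\[
f_i^{\mathrm{res}}(z)=\min_{\mathcal U\in\mathcal Q_i,\ q_i(\mathcal U)\leq z}\bigl[f_{i-1}^{\mathrm{res}}(z-q_i(\mathcal U))+c_i(\mathcal U)\bigr].
\]
Each component costs $O(|\mathcal Q_i|(\kappa+1))$ operations. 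Summing over components and adding the state-generation work gives $O(G^{\mathrm{g}}+N^{\mathrm{r}}(\kappa+1))$. Reconstruction by back-pointers fits within this bound.

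\textbf{Existence of $z_*$.} Let $\{\mathcal U_i^\star\}$ attain $m^-$. Then
\[
\sum_i\bigl(J_i(\mathcal U_i^\star)-\underline J_i\bigr)=m^--L_0\leq U_0-L_0=B .
\]
Flooring termwise gives $\sum_i q_i(\mathcal U_i^\star)\leq B/h=\kappa$. The optimal selection is feasible, so $f_\nu^{\mathrm{res}}$ of its rounded total is at most $c^{\max}$. Hence $z_*$ exists, and the reconstructed $\widehat{\boldsymbol a}$ is feasible.

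\textbf{Lower bound.} Take any feasible selection, with rounded total $z'$. If $z'\leq\kappa$, then $f_\nu^{\mathrm{res}}(z')\leq c^{\max}$, so $z'\geq z_*$ by minimality. Its true excess over $L_0$ is at least $h z'\geq h z_*$, because $q_i\leq (J_i-\underline J_i)/h$. If $z'>\kappa$, its excess is at least $h\kappa\geq hz_*$. Either way every feasible selection costs at least $L_\kappa=L_0+hz_*$. So $L_\kappa\leq m^-$, and Lemma~\ref{lem:subset-original} transfers this to all schedules.

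\textbf{Upper bound.} For the reconstructed selection, each floor loses strictly less than $h$, so $J_i-\underline J_i<h(q_i+1)$. Summing over the $\nu$ components gives
\[
J^-(\widehat{\boldsymbol a})\leq L_0+h(z_*+\nu)=L_\kappa+\nu B/\kappa ,
\]
which is \eqref{eq:rounded-gap}.

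\textbf{Main obstacle.} The delicate step is the lower-bound argument. It needs the DP to store the \emph{minimum exact reservation} per rounded total, so that minimality of $z_*$ certifies that no feasible selection has a smaller rounded cost. It also needs a separate treatment of selections whose rounded total exceeds the table range $\kappa$.
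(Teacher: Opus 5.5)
Your proposal is correct and follows essentially the same argument as the paper: the same $B=0$ case and the same DP over rounded excess with $h=B/\kappa$ storing minimum exact reservations. It also uses the same existence of $z_*$ from the optimal selection's total excess being at most $B$, the same lower bound $L_\kappa=L_0+hz_*$, and the same floor-loss upper bound $L_\kappa+\nu B/\kappa$. The only cosmetic differences are that the paper discards states with $\Delta_i(\mathcal U)>B$ in advance and compares $L_\kappa$ only against an optimal selection, whose rounded total is already at most $\kappa$; your separate case $z'>\kappa$ is therefore harmless but unnecessary.
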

\begin{proof}[Proof of Lemma~\ref{thm:component-solver}]
If $B=0$, the incumbent and $L_0$ coincide with $m^-$, so return the incumbent with $L_\kappa=L_0$. Otherwise set the rounding unit $h=B/\kappa$. The excess cost and rounded excess above the component minimum are
\begin{equation}
 \Delta_i(\mathcal U)=J_i(\mathcal U)-\underline J_i\geq0,\qquad
 q_i(\mathcal U)=\left\lfloor\frac{\Delta_i(\mathcal U)}h\right\rfloor.
\end{equation}
Discard states with $\Delta_i(\mathcal U)>B$. An optimal state selection has total excess at most $B$, so its rounded excess is at most $\kappa$.

The value $f_i^{\mathrm{res}}(z)$ is the minimum exact reservation of one state per component among the first $i$ components with total rounded excess $z$. Initialize $f_0^{\mathrm{res}}(0)=0$ and $f_0^{\mathrm{res}}(z)=+\infty$ otherwise, and compute
\begin{equation}
 f_i^{\mathrm{res}}(z)=\min_{\mathcal U\in\mathcal Q_i:\,\Delta_i(\mathcal U)\leq B,\ q_i(\mathcal U)\leq z}
       \{f_{i-1}^{\mathrm{res}}(z-q_i(\mathcal U))+c_i(\mathcal U)\},\qquad 0\leq z\leq \kappa.
 \label{eq:rounded-component-dp}
\end{equation}
Removing the last state leaves rounded total $z-q_i(\mathcal U)$; appending it adds reservation $c_i(\mathcal U)$. No resource coefficient is rounded. The smallest feasible rounded total is $z_* =\min\{z\leq \kappa:f_\nu^{\mathrm{res}}(z)\leq {c^{\max}}\}$, which exists because an optimal selection is retained. Reconstruct its states and remove unused results. Let $\widehat{\mathcal U}_i$ be this selection and $\boldsymbol{a}_\kappa$ the resulting feasible schedule, and set
\begin{equation}
 L_\kappa=L_0+h z_*.
 \label{eq:rounded-lower-bound}
\end{equation}
For an optimal selection $\mathcal U_i^*$,
\[
 L_\kappa\leq L_0+h\sum_i q_i(\mathcal U_i^*)
       \leq L_0+\sum_i\Delta_i(\mathcal U_i^*)=m^-.
\]
For the reconstructed selection $\widehat{\mathcal U}_i$, rounding down gives
\[
 J^-(\boldsymbol{a}_\kappa)\leq\sum_iJ_i(\widehat{\mathcal U}_i)
       < L_0+h(z_*+\nu)=L_\kappa+\nu B/\kappa.
\]
Let $\widehat{\boldsymbol{a}}$ be the cheaper of $\boldsymbol{a}_\kappa$ and the incumbent. Replacing $L_\kappa$ by any larger valid global lower bound can only reduce the gap, and \eqref{eq:rounded-gap} follows. Each state is processed for at most $\kappa+1$ rounded totals, so the arithmetic work is $O(G^{\mathrm{g}}+N^{\mathrm{r}}(\kappa+1))$.
\end{proof}

For fixed component size bounds $K_{\mathrm{c}},M_{\mathrm{c}}$ and resolution $\kappa$, this work is linear in the component count.
\begin{corollary}[Component complexity]
\label{cor:regime}
Under Assumption~\ref{ass:components}, $|\cG_i|\leq K_{\mathrm{c}}$ and $|\cR_i|\leq M_{\mathrm{c}}$, the component solver uses $O(2^{K_{\mathrm{c}}}\nu({K_{\mathrm{c}}}{M_{\mathrm{c}}}+\kappa))$ arithmetic operations.
\end{corollary}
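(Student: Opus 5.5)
The claim is a count of arithmetic operations, so I would split it into two parts. The first is the cost of generating the retained states $\mathcal Q_i$, the term $G^{\mathrm{g}}$. The second is the cost of the rounded dynamic program~\eqref{eq:rounded-component-dp}, which Lemma~\ref{thm:component-solver} already bounds by $O(N^{\mathrm{r}}(\kappa+1))$ once the states exist. The task is to express both in terms of $K_{\mathrm{c}}$, $M_{\mathrm{c}}$, $\nu$ and $\kappa$, and then add them.

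\emph{State generation.} Component $i$ has at most $2^{K_{\mathrm{c}}}-1$ nonempty subsets $\mathcal U\subseteq\cG_i$. For each subset, \eqref{eq:component-state} is evaluated as follows:
\begin{itemize}[leftmargin=*]
\item the preparation sum $\sum_{g\in\mathcal U}F_g$ and the reservation $c_i(\mathcal U)$ take $O(K_{\mathrm{c}})$ operations;
\item for each of the at most $M_{\mathrm{c}}$ tasks, the minimum of $\widetilde e_{rg}$ over compatible $g\in\mathcal U$ takes $O(K_{\mathrm{c}})$ operations;
\item checking coverage, removing unused results and testing the capacity bound all fit within the same $O(K_{\mathrm{c}}M_{\mathrm{c}})$ budget.
\end{itemize}
Precomputing the $\widetilde e_{rg}$ once costs $O(|\cE|)\le O(\nu K_{\mathrm{c}}M_{\mathrm{c}})$, which is dominated. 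This gives $G^{\mathrm{g}}=O(2^{K_{\mathrm{c}}}\nu K_{\mathrm{c}}M_{\mathrm{c}})$.

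\emph{Duplicates.} The main subtlety is discarding duplicate states. Pruning unused results can map several subsets to the same reduced subset. I would avoid hashing or sorting by indexing each state by its reduced subset as a $K_{\mathrm{c}}$-bit mask. An array of size $2^{K_{\mathrm{c}}}$ per component then deduplicates each state in $O(K_{\mathrm{c}})$ operations, which stays within the per-subset bound. The minima $\underline J_i$ are computed during the same pass.

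\emph{Dynamic program.} Since $|\mathcal Q_i|\le 2^{K_{\mathrm{c}}}$, we have $N^{\mathrm{r}}\le\nu 2^{K_{\mathrm{c}}}$. The DP work is therefore $O(N^{\mathrm{r}}(\kappa+1))=O(2^{K_{\mathrm{c}}}\nu(\kappa+1))$. The remaining steps are all dominated:
\begin{itemize}[leftmargin=*]
\item the $\nu(\kappa+1)$ table initializations;
\item the scan for $z_*$;
\item reconstruction by backtracking through $\nu$ components, at $O(K_{\mathrm{c}}M_{\mathrm{c}})$ per component to rebuild the assignments.
\end{itemize}

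Adding the two parts gives $O\bigl(2^{K_{\mathrm{c}}}\nu(K_{\mathrm{c}}M_{\mathrm{c}}+\kappa+1)\bigr)$. The $+1$ is absorbed because $K_{\mathrm{c}}M_{\mathrm{c}}\ge1$ for any nonempty component. This yields the stated $O(2^{K_{\mathrm{c}}}\nu(K_{\mathrm{c}}M_{\mathrm{c}}+\kappa))$, which is linear in $\nu$ for fixed $K_{\mathrm{c}}$, $M_{\mathrm{c}}$ and $\kappa$.
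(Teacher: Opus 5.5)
Your proposal is correct and follows essentially the same route as the paper. The paper bounds $N^{\mathrm{r}}\le\nu(2^{K_{\mathrm{c}}}-1)$ and $G^{\mathrm{g}}\le\nu 2^{K_{\mathrm{c}}}K_{\mathrm{c}}(M_{\mathrm{c}}+1)$, then substitutes both into the $O(G^{\mathrm{g}}+N^{\mathrm{r}}(\kappa+1))$ bound of Lemma~\ref{thm:component-solver}; your per-subset operation count and bitmask deduplication make explicit the state-generation cost it asserts directly, though deduplication is not needed for the count since each subset yields at most one state.
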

\begin{proof}[Proof of Corollary~\ref{cor:regime}]
With $|\cG_i|\leq {K_{\mathrm{c}}}$ and $|\cR_i|\leq {M_{\mathrm{c}}}$, Lemma~\ref{thm:component-solver} gives $N^{\mathrm{r}}\leq\nu(2^{K_{\mathrm{c}}}-1)$ and $G^{\mathrm{g}}=\sum_i2^{|\cG_i|}|\cG_i|(|\cR_i|+1)\leq\nu2^{K_{\mathrm{c}}}{K_{\mathrm{c}}}({M_{\mathrm{c}}}+1)$, so $G^{\mathrm{g}}+N^{\mathrm{r}}(\kappa+1)=O(2^{K_{\mathrm{c}}}\nu({K_{\mathrm{c}}}{M_{\mathrm{c}}}+\kappa))$. Enumerating all preparation decisions instead considers $2^{\sum_i|\cG_i|}$ subsets.
\end{proof}

\subsection{Exact Dynamic Programming}\label{app:exact-capacity}\label{app:subset-proof}
For nonnegative integer reservations and integer capacity ${c^{\max}}$, Exact DP selects one retained state per component under the shared capacity. Let $f_i(c)$ be the minimum cost of the first $i$ components with total reservation exactly $c$. Initialize $f_0(0)=0$ and $f_0(c)=+\infty$ for $c\in\{1,\ldots,{c^{\max}}\}$, and compute
\begin{equation}
 f_i(c)=\min_{\mathcal U\in\mathcal Q_i:\,c_i(\mathcal U)\leq c}
        \{f_{i-1}(c-c_i(\mathcal U))+J_i(\mathcal U)\},\qquad 0\leq c\leq {c^{\max}}.
 \label{eq:exact-component-dp}
\end{equation}
Appending a state $\mathcal U$ to a selection of reservation $c-c_i(\mathcal U)$ gives total reservation $c$, so $m^-=\min_{c\leq {c^{\max}}}f_\nu(c)$. The recurrence takes $O(({c^{\max}}+1)N^{\mathrm{r}})$ arithmetic work, plus $O(G^{\mathrm{g}})$ to generate states. Because this work depends on the value of ${c^{\max}}$, Exact DP is pseudo-polynomial.

The component dispatch and scaling experiments use assignment catalogues rather than result-subset enumeration. Both preserve the integral optimum. With four tasks and four results, an assignment catalogue examines up to $256$ assignments and result-subset enumeration at most $15$ nonempty subsets. Exact DP combines retained states by capacity, and \textsc{RLS} by rounded excess cost. The bound $G^{\mathrm{g}}=O(\nu2^{K_{\mathrm{c}}}{K_{\mathrm{c}}}({M_{\mathrm{c}}}+1))$ applies to result-subset enumeration, not to these assignment catalogues. Reported times include catalogue construction and any overrun.

\section{Joint Scheduling}\label{app:joint-scheduling}
\subsection{General Local Scheduling}\label{app:branch-procedures}\label{sec:joint-bounds}\label{app:collective-moves}\label{app:dual-proof}
Fix an epoch and omit its subscript. Assignment and capacity prices give a Lagrangian lower bound. Let $[v]_+=\max\{v,0\}$, with assignment prices $\boldsymbol{u}\in\R^{|\cR|}$ and capacity prices $\boldsymbol{\pi}\geq0$. Define
\begin{align}
 b_g(\boldsymbol{u},\boldsymbol{\pi})&=F_g+\boldsymbol{\pi}^\top \boldsymbol{c}_g-
       \sum_{r:(r,g)\in\cE}[u_r-\widetilde e_{rg}-\boldsymbol{\pi}^\top \boldsymbol{c}_{rg}]_+,\nonumber\\
 L(\boldsymbol{u},\boldsymbol{\pi})&=\sum_r u_r-\boldsymbol{\pi}^\top \bar{\boldsymbol{c}}+\sum_g\min\{0,b_g(\boldsymbol{u},\boldsymbol{\pi})\}.
 \label{eq:dual-bound}
\end{align}
The term $b_g$ compares the preparation cost and priced resource reservation of result $g$ with the assignment benefit under these prices. Proposition~\ref{prop:dual-screening} gives $L(\boldsymbol{u},\boldsymbol{\pi})\leq m^-$ and screens results that cannot improve the incumbent.

\begin{algorithm}[H]
\caption{General Local Scheduling (\textsc{GLS})}\label{alg:gls}
\KwIn{Observed state $\mathsf{s}$, optimistic assignment costs $\widetilde e_{rg}$,
feasible incumbent $\boldsymbol{a}^0$, $\boldsymbol{V},\beta$; $\alpha,\omega>0$; optional deadline.}
\KwOut{Feasible decision $\boldsymbol{a}^0$ and valid global lower bound $L^-$.}
$u_r\leftarrow\min_{g:(r,g)\in\cE}\widetilde e_{rg}$,
$\boldsymbol{\pi}\leftarrow\boldsymbol{0}$\;
$L^-\leftarrow L(\boldsymbol{u},\boldsymbol{\pi})$ from \eqref{eq:dual-bound}\;
\While{\eqref{eq:dual-stop} fails and time remains}{
Update prices by \eqref{eq:dual-update}\;
$L^-\leftarrow\max\{L^-,L(\boldsymbol{u},\boldsymbol{\pi})\}$\;
Screen results by Proposition~\ref{prop:dual-screening}\;
Improve $\boldsymbol{a}^0$ by feasible collective moves with $\Delta J^-<0$
using \eqref{eq:collective-move}--\eqref{eq:collective-resource}\;
\If{\eqref{eq:dual-stop} still fails and time remains}{
Use LP/MILP to seek a feasible candidate $\widehat{\boldsymbol{a}}$ and a valid global bound $\widehat L$\;
If $\widehat{\boldsymbol{a}}$ is verified feasible, set
$\boldsymbol{a}^0\leftarrow\arg\min_{\boldsymbol{a}\in\{\boldsymbol{a}^0,\widehat{\boldsymbol{a}}\}}J^-(\boldsymbol{a})$\;
If $\widehat L$ is verified valid, set $L^-\leftarrow\max\{L^-,\widehat L\}$\;
}
}
\Return{$\boldsymbol{a}^0,L^-$}\;
\end{algorithm}

Accepted moves jointly change result selections and task assignments. Consider reassigning a nonempty compatible task group $\mathcal M$ to result $g$. Let $\iota(r)$ be the result currently assigned to task $r$, and let $\mathcal Z$ contain currently used results other than $g$ whose tasks all lie in $\mathcal M$. The move selects $g$ and removes the results in $\mathcal Z$, changing the optimistic cost by
\begin{equation}
 \Delta J^-=F_g(1-y_g)-\sum_{h\in\mathcal Z}F_h
       +\sum_{r\in\mathcal M}(\widetilde e_{rg}-\widetilde e_{r,\iota(r)}).
 \label{eq:collective-move}
\end{equation}
The first two terms account for changes in preparation costs, and the last for changes in execution cost and optimistic quality loss. The current resource reservation is $\boldsymbol{c}(\boldsymbol{a})=\sum_g\boldsymbol{c}_gy_g+\sum_{(r,g)\in\cE}\boldsymbol{c}_{rg}x_{rg}$, and the move changes it by
\begin{equation}
\Delta \boldsymbol{c}=\boldsymbol{c}_g(1-y_g)-\sum_{h\in\mathcal Z}\boldsymbol{c}_h
 +\sum_{r\in\mathcal M}(\boldsymbol{c}_{rg}-\boldsymbol{c}_{r,\iota(r)}).
\label{eq:collective-resource}
\end{equation}
A move is accepted only if the new assignments are compatible, $\boldsymbol{c}(\boldsymbol{a})+\Delta\boldsymbol{c}\leq\bar{\boldsymbol{c}}$, and $\Delta J^-<0$. It preserves feasibility and reduces the optimistic cost.

\begin{proposition}[Global lower bound and screening]\label{prop:dual-screening}
For any $\boldsymbol{u}\in\R^{|\cR|}$ and $\boldsymbol{\pi}\geq0$, $L(\boldsymbol{u},\boldsymbol{\pi})\leq m^-$. Every feasible schedule selecting $g$ has optimistic cost at least $L(\boldsymbol{u},\boldsymbol{\pi})+[b_g(\boldsymbol{u},\boldsymbol{\pi})]_+$.
\end{proposition}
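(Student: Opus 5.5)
The plan is standard Lagrangian weak duality, arranged so that the per-result contributions appear explicitly. We work with the integer program $\mathcal{P}$ under optimistic costs $\widetilde e_{rg}$. Take any feasible schedule $\boldsymbol{a}=(\boldsymbol{x},\boldsymbol{y})$, with $\boldsymbol{u}\in\R^{|\cR|}$ and $\boldsymbol{\pi}\geq0$. We dualize the assignment constraints \eqref{eq:assignment} with multipliers $u_r$ and the capacity constraint \eqref{eq:capacity} with $\boldsymbol{\pi}$. Equality in \eqref{eq:assignment} and $\boldsymbol{\pi}^\top(\bar{\boldsymbol{c}}-\text{usage})\geq0$ then give
\[
 J^-(\boldsymbol{a})\geq\sum_r u_r-\boldsymbol{\pi}^\top\bar{\boldsymbol{c}}
 +\sum_g\Bigl[(F_g+\boldsymbol{\pi}^\top\boldsymbol{c}_g)y_g
 -\sum_{r:(r,g)\in\cE}(u_r-\widetilde e_{rg}-\boldsymbol{\pi}^\top\boldsymbol{c}_{rg})x_{rg}\Bigr].
\]
This step is only a rearrangement: the assignment penalty term vanishes identically, and the capacity term has the right sign.

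Next we bound each bracket from below by a per-result quantity. Using $x_{rg}\leq y_g$ from \eqref{eq:preparation}, $x_{rg}\in\{0,1\}$, and $v\,x_{rg}\leq[v]_+y_g$ for any scalar $v$, the bracket for result $g$ is at least $y_g\,b_g(\boldsymbol{u},\boldsymbol{\pi})$. Since $y_g\in\{0,1\}$, this gives
\[
 J^-(\boldsymbol{a})\geq\sum_r u_r-\boldsymbol{\pi}^\top\bar{\boldsymbol{c}}+\sum_g y_g\,b_g(\boldsymbol{u},\boldsymbol{\pi}).
\]
We then use $y_g b_g\geq\min\{0,b_g\}$ for every $g$, which yields $J^-(\boldsymbol{a})\geq L(\boldsymbol{u},\boldsymbol{\pi})$. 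Minimizing over feasible $\boldsymbol{a}$ gives $L\leq m^-$.

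For the screening claim, suppose the schedule selects $g$, so $y_g=1$. In the display above we keep the term for $g$ exactly as $b_g$, and bound every other result's term below by $\min\{0,b_h\}$. The result is
\[
 J^-(\boldsymbol{a})\geq L(\boldsymbol{u},\boldsymbol{\pi})-\min\{0,b_g\}+b_g=L(\boldsymbol{u},\boldsymbol{\pi})+[b_g]_+,
\]
because $b_g-\min\{0,b_g\}=[b_g]_+$. Consequently, if $L+[b_g]_+\geq J^-(\boldsymbol{a}^0)$, no schedule selecting $g$ improves the incumbent.

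The main point needing care is the per-result inequality, specifically the sign handling. We must check that $\boldsymbol{\pi}\geq0$ makes the dropped capacity slack nonnegative. We must also check that the resource terms are placed correctly: $\boldsymbol{c}_g$ goes with $y_g$, and $\boldsymbol{c}_{rg}$ goes inside the positive part with $x_{rg}$. Apart from that, the argument is a direct calculation, and it needs no assumption on $\boldsymbol{u}$ or on the magnitudes of the reservations.
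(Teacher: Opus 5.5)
Your proof is correct and follows the paper's argument. Both use Lagrangian weak duality: feasibility makes the assignment term vanish, $\boldsymbol{\pi}\geq0$ makes the capacity term nonpositive, and forcing $y_g=1$ replaces $\min\{0,b_g\}$ by $b_g$. The only difference is cosmetic: the paper first identifies $L(\boldsymbol{u},\boldsymbol{\pi})$ as the exact minimum of the relaxed Lagrangian over $x_{rg}\leq y_g$, whereas you bound the Lagrangian at the feasible point directly via $v\,x_{rg}\leq[v]_+y_g$, which is the same per-result local minimization done pointwise.
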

Define the preparation and optimistic assignment costs adjusted by capacity prices as $F_g^{\boldsymbol{\pi}}=F_g+\boldsymbol{\pi}^\top \boldsymbol{c}_g$ and $\widetilde e_{rg}^{\boldsymbol{\pi}}=\widetilde e_{rg}+\boldsymbol{\pi}^\top \boldsymbol{c}_{rg}$.

\paragraph{Local minimization.}
Dualize the assignment equations by $\sum_r u_r(1-\sum_g x_{rg})$ and the capacity inequality by $\boldsymbol{\pi}^\top(\sum_g \boldsymbol{c}_g y_g+\sum_{r,g}\boldsymbol{c}_{rg}x_{rg}-\bar{\boldsymbol{c}})$. With the local constraints $x_{rg}\leq y_g$ and binary restrictions, the Lagrangian is
\begin{equation}
 \operatorname{lag}(\boldsymbol{x},\boldsymbol{y};\boldsymbol{u},\boldsymbol{\pi})=\sum_r u_r-\boldsymbol{\pi}^\top \bar{\boldsymbol{c}}+
 \sum_g\left(F_g^{\boldsymbol{\pi}} y_g+
             \sum_{r:(r,g)\in\cE}(\widetilde e_{rg}^{\boldsymbol{\pi}}-u_r)x_{rg}\right).
 \label{eq:dual-lagrangian}
\end{equation}
For each result $g$, $y_g=0$ contributes zero. If $y_g=1$, set $x_{rg}=1$ when $u_r>\widetilde e_{rg}^{\boldsymbol{\pi}}$ and $x_{rg}=0$ when $u_r<\widetilde e_{rg}^{\boldsymbol{\pi}}$, with either value at a tie, giving $b_g(\boldsymbol{u},\boldsymbol{\pi})$. The local minimum is $\min\{0,b_g(\boldsymbol{u},\boldsymbol{\pi})\}$, which yields \eqref{eq:dual-bound}.

\begin{proof}[Proof of Proposition~\ref{prop:dual-screening}]
For any feasible schedule $\boldsymbol{a}$, the assignment terms vanish and the capacity term in \eqref{eq:dual-lagrangian} is nonpositive, so
\[
 L(\boldsymbol{u},\boldsymbol{\pi})=\min_{\substack{x_{rg}\leq y_g\ ((r,g)\in\cE)\\\boldsymbol{x},\boldsymbol{y}\ \mathrm{binary}}}\operatorname{lag}(\boldsymbol{x},\boldsymbol{y};\boldsymbol{u},\boldsymbol{\pi})
 \leq\operatorname{lag}(\boldsymbol{a};\boldsymbol{u},\boldsymbol{\pi})\leq J^-(\boldsymbol{a}).
\]
A feasible primal LP objective alone need not lower-bound the integer minimum unless LP optimality is established. Forcing $y_g=1$ replaces $\min\{0,b_g\}$ by $b_g$, giving the lower bound $L+[b_g]_+$. If this exceeds $J^-(\bar{\boldsymbol{a}})$ for a feasible incumbent $\bar{\boldsymbol{a}}$, no optimal schedule under the optimistic costs can select $g$.
\end{proof}

\subsection{GLS Initialization and Price Updates}\label{app:oracle}\label{app:general-solver}
Before initializing the prices in Algorithm~\ref{alg:gls}, \textsc{GLS} compares the supplied feasible fallback, a cached feasible decision for the current template, and an incremental greedy decision under the current optimistic costs. It retains the cheapest completed feasible candidate. The initial lower bound uses $u_r=\min_{g:(r,g)\in\cE}\widetilde e_{rg}$ and $\boldsymbol{\pi}=\boldsymbol{0}$, giving $b_g=F_g\geq0$ and $L=\sum_ru_r$.

\paragraph{Price updates.}
Let $(\widetilde{\boldsymbol{x}},\widetilde{\boldsymbol{y}})$ be a minimizer of the relaxed Lagrangian for the current prices. A supergradient of $L$ is
\begin{equation}
 h_r=1-\sum_{g:(r,g)\in\cE}\widetilde x_{rg},\qquad
 \boldsymbol{s}^{\mathrm{cap}}=\sum_g \boldsymbol{c}_g\widetilde y_g+
           \sum_{(r,g)\in\cE}\boldsymbol{c}_{rg}\widetilde x_{rg}-\bar{\boldsymbol{c}}.
 \label{eq:dual-supergradient}
\end{equation}
At this minimizer,
\[
 L(\boldsymbol{u}',\boldsymbol{\pi}')
 \leq L(\boldsymbol{u},\boldsymbol{\pi})
       +\boldsymbol{h}^\top(\boldsymbol{u}'-\boldsymbol{u})
       +(\boldsymbol{s}^{\mathrm{cap}})^\top(\boldsymbol{\pi}'-\boldsymbol{\pi}).
\]
For any step size $\gamma_k>0$, the projected update is
\begin{equation}
 \boldsymbol{u}^{k+1}=\boldsymbol{u}^k+\gamma_k \boldsymbol{h}^k,\qquad
 \boldsymbol{\pi}^{k+1}=[\boldsymbol{\pi}^k+\gamma_k \boldsymbol{s}^{\mathrm{cap},\,k}]_+,
 \label{eq:dual-update}
\end{equation}
with the capacity prices projected coordinatewise. An update need not improve $L$, so the solver retains the largest valid lower bound seen so far.

\section{Decision Quality}\label{app:decision-quality}

\subsection{Proof of Theorem~\ref{thm:directional-delay}}
\label{app:directional-proof}

We use the linear quality model in Section~\ref{sec:system-model}. At each epoch, either \textsc{RLS} or \textsc{GLS} supplies \textsc{JAS} with a feasible schedule and a valid global lower bound. Let $N_T=\sum_{t\leq T}|\cR_t|$ be the total task count. Define the \emph{logarithmic potential} $\Lambda_T$ and the confidence-radius bound $\overline\beta_T$ as
\begin{equation}
 \Lambda_T={D_\phi}\log\!\left(1+\frac{N_T}{\lambda_0{D_\phi}}\right),\qquad
 \overline\beta_T=
 \sigma\sqrt{\Lambda_T+2\log(1/\delta)}+\sqrt{\lambda_0}S.
\end{equation}
The \emph{selected-context matrix} $\boldsymbol{\Phi}_t=\sum_{(r,g):x_{rg,t}=1}\boldsymbol{\phi}_{rg,t}\boldsymbol{\phi}_{rg,t}^{\top}$ collects the contexts selected at epoch $t$. The matrix $\boldsymbol{W}_t=\lambda_0\boldsymbol{I}+\sum_{s<t}\boldsymbol{\Phi}_s$ includes every selected assignment before epoch $t$, whereas $\boldsymbol{V}_t$ includes only revealed feedback. Their difference captures pending feedback, whose effect on selected assignments is measured by the \emph{directional delay term} $H_T$. Define
\begin{equation}
 H_T=\sum_{t\leq T}
 \tr\!\left((\boldsymbol{V}_t^{-1}-\boldsymbol{W}_t^{-1})\boldsymbol{\Phi}_t\right),\qquad
 \Xi_T=\sqrt{N_T(2\Lambda_T+H_T)}.
 \label{eq:directional-delay}
\end{equation}
Here $\Xi_T$ bounds cumulative context uncertainty. We write $\boldsymbol{a}_t=\boldsymbol{a}_t^0$ for the incumbent.

We first bound the per-epoch excess cost for $\mathcal{P}_t$, then regret for $\mathbb{P}$, accounting for selected contexts and pending feedback. All sums over $(r,g)$ at epoch $t$ refer to compatible pairs in $\cE_t$.

\begin{proof}
\par\smallskip\noindent\textbf{Step 1: Confidence containment.}\quad
Under Assumption~\ref{ass:online} in reveal order, the event $\mathcal E_{\mathrm{conf}}$ in \eqref{eq:confidenceproof} has probability at least $1-\delta$ for any $\lambda_0>0$, including the theorem's choice $\lambda_0\geq\max\{1,M\}$. On this event, Cauchy--Schwarz gives, for every compatible pair and epoch,
\[
 \left|\boldsymbol{\phi}_{rg,t}^{\top}
       (\widehat{\boldsymbol{\theta}}_t-\boldsymbol{\theta}^{\star})\right|
 \leq\beta_t\|\boldsymbol{\phi}_{rg,t}\|_{\boldsymbol{V}_t^{-1}}.
\]
Clipping preserves containment because expected quality losses lie in $[0,1]$:
\begin{equation}
 \ell_{rg,t}^{-}\leq\ell_{rg,t}\leq\ell_{rg,t}^{+},\qquad
 0\leq\ell_{rg,t}-\ell_{rg,t}^{-}
       \leq2\beta_t\|\boldsymbol{\phi}_{rg,t}\|_{\boldsymbol{V}_t^{-1}}.
 \label{eq:directional-confidence}
\end{equation}
This event covers contexts selected adaptively from the observed state and revealed feedback, without an additional union bound over decisions.

\par\smallskip\noindent\textbf{Step 2: Per-epoch decision quality.}\quad
Optimistic costs separate learning and optimization error. Since $J_t^-$, $J_t$ and $J_t^+$ differ only in quality loss, every feasible decision $\boldsymbol{a}$ satisfies $J_t^-(\boldsymbol{a})\leq J_t(\boldsymbol{a})\leq J_t^+(\boldsymbol{a})$ on $\mathcal E_{\mathrm{conf}}$. The global lower bound gives
\[
 L_t^-\leq m_t^-
       \leq J_t^-(\boldsymbol{a}_t^{\star})
       \leq J_t(\boldsymbol{a}_t^{\star}).
\]
Together with $J_t(\boldsymbol{a}_t)\leq J_t^+(\boldsymbol{a}_t)$, this proves the computable bound \eqref{eq:excess-cost-bound} for $\mathcal{P}_t$. Using the actual optimization error $\eta_t=J_t^-(\boldsymbol{a}_t)-m_t^-$ from \eqref{eq:gaps}, we obtain
\begin{align}
 J_t(\boldsymbol{a}_t)-J_t(\boldsymbol{a}_t^{\star})
 &=\bigl[J_t(\boldsymbol{a}_t)-J_t^-(\boldsymbol{a}_t)\bigr]
   +\eta_t+\bigl[m_t^--J_t(\boldsymbol{a}_t^{\star})\bigr]\nonumber\\
 &\leq\eta_t+\omega\sum_{(r,g):x_{rg,t}=1}
                      (\ell_{rg,t}-\ell_{rg,t}^-)\nonumber\\
 &\leq\eta_t+2\omega\beta_t
       \sum_{(r,g):x_{rg,t}=1}\|\boldsymbol{\phi}_{rg,t}\|_{\boldsymbol{V}_t^{-1}}.
 \label{eq:directional-epoch-regret}
\end{align}
Both decisions use the same feasible set $\cA_t(\mathsf{s}_t)$, as required by \eqref{eq:online-problem}.

\par\smallskip\noindent\textbf{Step 3: Selected-assignment potential.}\quad
The matrix $\boldsymbol{W}_t$ bounds uncertainty before accounting for delayed feedback. It satisfies $\boldsymbol{W}_{t+1}=\boldsymbol{W}_t+\boldsymbol{\Phi}_t$ and $\boldsymbol{W}_1=\lambda_0\boldsymbol{I}$. Each epoch selects one assignment for each admitted task, so the context-norm bound implies
\[
 0\preceq \boldsymbol{\Phi}_t\preceq|\cR_t|\boldsymbol{I}\preceq M\boldsymbol{I}\preceq\lambda_0\boldsymbol{I}\preceq \boldsymbol{W}_t.
\]
Thus the eigenvalues of $\boldsymbol{W}_t^{-1/2}\boldsymbol{\Phi}_t\boldsymbol{W}_t^{-1/2}$ lie in $[0,1]$. On this interval, $u\leq2\log(1+u)$ because $2\log(1+u)-u$ starts at zero and has derivative $(1-u)/(1+u)\geq0$. Applying this inequality gives
\begin{align*}
 \tr(\boldsymbol{W}_t^{-1}\boldsymbol{\Phi}_t)
 &=\tr(\boldsymbol{W}_t^{-1/2}\boldsymbol{\Phi}_t\boldsymbol{W}_t^{-1/2})\\
 &\leq2\log\det(\boldsymbol{I}+\boldsymbol{W}_t^{-1/2}\boldsymbol{\Phi}_t\boldsymbol{W}_t^{-1/2}) \\
  & =2\log\frac{\det \boldsymbol{W}_{t+1}}{\det \boldsymbol{W}_t}.
\end{align*}
Moreover, $\tr(\boldsymbol{W}_{T+1})\leq\lambda_0{D_\phi}+N_T$. The arithmetic--geometric mean inequality for its eigenvalues gives
\[
 \log\frac{\det \boldsymbol{W}_{T+1}}{\det(\lambda_0\boldsymbol{I})}
 \leq {D_\phi}\log\!\left(\frac{\tr(\boldsymbol{W}_{T+1})}{\lambda_0{D_\phi}}\right)
 \leq {D_\phi}\log\!\left(1+\frac{N_T}{\lambda_0{D_\phi}}\right)=\Lambda_T.
\]
Summing over epochs and telescoping gives
\begin{equation}
 \sum_{t=1}^T\tr(\boldsymbol{W}_t^{-1}\boldsymbol{\Phi}_t)
 \leq2\log\frac{\det \boldsymbol{W}_{T+1}}{\det(\lambda_0\boldsymbol{I})}
 \leq2\Lambda_T.
 \label{eq:delay-potential}
\end{equation}
This argument uses the full batch $\boldsymbol{\Phi}_t$, without requiring feedback between assignments in the same epoch.

\par\smallskip\noindent\textbf{Step 4: Pending feedback and selected uncertainty.}\quad
Pending feedback adds the nonnegative uncertainty term $H_T$. The matrix $\boldsymbol{P}_t=\boldsymbol{W}_t-\boldsymbol{V}_t$ sums the context outer products of observations pending before epoch $t$. Each revealed observation enters the history once, so $\boldsymbol{P}_t\succeq0$ and $\boldsymbol{V}_t\preceq\boldsymbol{W}_t$. Inversion gives $\boldsymbol{V}_t^{-1}-\boldsymbol{W}_t^{-1}\succeq0$, making every term in $H_T$ nonnegative. By the definition of $\boldsymbol{\Phi}_t$,
\begin{align}
 \sum_{t=1}^T\sum_{(r,g):x_{rg,t}=1}
       \|\boldsymbol{\phi}_{rg,t}\|_{\boldsymbol{V}_t^{-1}}^2
 &=\sum_{t=1}^T\tr(\boldsymbol{V}_t^{-1}\boldsymbol{\Phi}_t)\nonumber\\
 &=\sum_{t=1}^T\tr(\boldsymbol{W}_t^{-1}\boldsymbol{\Phi}_t)+H_T
 \leq2\Lambda_T+H_T.
 \label{eq:delay-potential-decomp}
\end{align}
Applying Cauchy--Schwarz to the $N_T$ selected assignments gives
\begin{equation}
 \sum_{t=1}^T\sum_{(r,g):x_{rg,t}=1}
       \|\boldsymbol{\phi}_{rg,t}\|_{\boldsymbol{V}_t^{-1}}
 \leq\sqrt{N_T(2\Lambda_T+H_T)}=\Xi_T.
 \label{eq:directional-width-sum}
\end{equation}
If $N_T=0$, both sides are zero. Finally, $\boldsymbol{V}_t\preceq \boldsymbol{W}_{T+1}$ and the same determinant bound used above imply
\[
 \beta_t
 \leq\sigma\sqrt{\Lambda_T+2\log(1/\delta)}+\sqrt{\lambda_0}S
 =\overline\beta_T.
\]
Summing \eqref{eq:directional-epoch-regret} and applying the radius bound and \eqref{eq:directional-width-sum} yields
\[
 R_T\leq\sum_{t=1}^T\eta_t
      +2\omega\overline\beta_T\Xi_T,
\]
which proves \eqref{eq:directional-regret} for $\mathbb{P}$ and separates optimization error from learning error. The matrix inequalities hold for every selected-context and reveal-time sequence allowed by the model. Both conclusions therefore hold on the same event $\mathcal E_{\mathrm{conf}}$, with probability at least $1-\delta$.
\end{proof}

\paragraph{Quality-aware stopping.}
Quality-aware stopping keeps optimization error at the scale of learning uncertainty. If \eqref{eq:dual-stop} is satisfied at every nonempty epoch, the valid lower bound $L_t^-$ gives
\[
 \eta_t\leq J_t^-(\boldsymbol{a}_t)-L_t^-
 \leq\alpha\omega\beta_t
       \sum_{(r,g):x_{rg,t}=1}\|\boldsymbol{\phi}_{rg,t}\|_{\boldsymbol{V}_t^{-1}}.
\]
Substituting this inequality into \eqref{eq:directional-epoch-regret} and summing with \eqref{eq:directional-width-sum} yields $R_T\leq(2+\alpha)\omega\overline\beta_T\Xi_T$. If a deadline is reached first, the theorem accounts for the remaining $\eta_t$, which can also be replaced by its computable upper bound $J_t^-(\boldsymbol{a}_t)-L_t^-$.

\paragraph{Pending feedback and selected contexts.}
Pending feedback changes the widths of the currently selected contexts. Define the normalized pending-context and selected-context matrices
\[
 \boldsymbol{Q}_t=\boldsymbol{V}_t^{-1/2}\boldsymbol{P}_t\boldsymbol{V}_t^{-1/2},\qquad
 \boldsymbol{Z}_t=\boldsymbol{V}_t^{-1/2}\boldsymbol{\Phi}_t\boldsymbol{V}_t^{-1/2}.
\]
Since $\boldsymbol{W}_t=\boldsymbol{V}_t^{1/2}(\boldsymbol{I}+\boldsymbol{Q}_t)\boldsymbol{V}_t^{1/2}$, matrix inversion and cyclicity of trace give
\begin{align*}
 \tr\!\left((\boldsymbol{V}_t^{-1}-\boldsymbol{W}_t^{-1})\boldsymbol{\Phi}_t\right)
 &=\tr\!\left([\boldsymbol{I}-(\boldsymbol{I}+\boldsymbol{Q}_t)^{-1}]\boldsymbol{Z}_t\right)\\
 &=\tr\!\left(\boldsymbol{Q}_t(\boldsymbol{I}+\boldsymbol{Q}_t)^{-1}\boldsymbol{Z}_t\right).
\end{align*}
The contribution is zero when $\boldsymbol{Q}_t\boldsymbol{Z}_t=0$. To bound it for every pattern of pending and selected contexts, let $\rho_t=\|\boldsymbol{Q}_t\|_{\mathrm{op}}$. The inequalities $\boldsymbol{P}_t\preceq\rho_t\boldsymbol{V}_t$ and $\boldsymbol{W}_t\preceq(1+\rho_t)\boldsymbol{V}_t$ imply
\[
 \boldsymbol{V}_t^{-1}-\boldsymbol{W}_t^{-1}\preceq\rho_t\boldsymbol{W}_t^{-1},\qquad
 \tr\!\left((\boldsymbol{V}_t^{-1}-\boldsymbol{W}_t^{-1})\boldsymbol{\Phi}_t\right)
 \leq\rho_t\tr(\boldsymbol{W}_t^{-1}\boldsymbol{\Phi}_t).
\]
At most $MD$ observations are pending, each with context norm at most one. Thus $\boldsymbol{P}_t\preceq MD\boldsymbol{I}$ and $\boldsymbol{V}_t\succeq\lambda_0\boldsymbol{I}$ imply $\rho_t\leq MD/\lambda_0$. Equation~\eqref{eq:delay-potential} then gives
\begin{equation}
 H_T\leq2\bigl(\max_{t\leq T}\rho_t\bigr)\Lambda_T
 \leq\frac{2MD}{\lambda_0}\Lambda_T,
 \qquad
 \Xi_T\leq
 \sqrt{2N_T(1+MD/\lambda_0)\Lambda_T}.
 \label{eq:directional-delay-envelope}
\end{equation}
Corollary~\ref{cor:horizon-regret} uses this bound. The realized $H_T$ depends on how selected contexts align with pending observations. A bound using only the pending count $MD$ omits this dependence.

\paragraph{Same pending count, different effect.}
The same number of pending observations can add different amounts of uncertainty, as a two-dimensional example shows. Let $\boldsymbol{V}_t=\operatorname{diag}(v_1,v_2)$ with $v_1,v_2\geq\lambda_0$, and suppose $n_{\mathrm{pend}}\leq MD$ pending observations all have context $\boldsymbol{e}_1$, so $\boldsymbol{P}_t=n_{\mathrm{pend}}\boldsymbol{e}_1\boldsymbol{e}_1^\top$. If $n_{\mathrm{cur}}\leq M$ current assignments all have context $\boldsymbol{e}_2$, their contribution to $H_T$ is zero. If their contexts instead equal $\boldsymbol{e}_1$, the contribution is
\[
 n_{\mathrm{cur}}\left(\frac{1}{v_1}-\frac{1}{v_1+n_{\mathrm{pend}}}\right)
 =\frac{n_{\mathrm{cur}}n_{\mathrm{pend}}}{v_1(v_1+n_{\mathrm{pend}})}.
\]
In this example, pending feedback adds uncertainty relative to $\boldsymbol{W}_t$ only when selected contexts align with the pending observations.

\paragraph{Dependent feedback blocks.}
Dependent feedback changes confidence widths while preserving the matrix argument, which uses only selected contexts and reveal times. Under the conditional block-feedback assumptions in Appendix~\ref{app:block-feedback}, Proposition~\ref{prop:block-feedback} provides the required simultaneous confidence event with radii $\beta_t^{(\chi)}$ and $\overline\beta_T^{(\chi)}$. Using these radii preserves the proof, the stopping guarantee and the delay bound. The dependence factor adjusts confidence widths, while $H_T$ captures pending feedback through the selected contexts.

\paragraph{Relation to approximate-oracle bandits.}\label{app:oracle-relation}
Approximate-oracle analyses bound regret by learning error plus oracle error. CUCB assumes an $(\alpha,\beta)$-approximation oracle and bounds regret against the $\alpha\beta$-scaled optimum \citep{chen2013cmab}. AESCB computes the ESCB index approximately, with error allowed to scale with the confidence widths, using budgeted linear maximization over the action set \citep{cuvelier2021aescb}. Theorem~\ref{thm:directional-delay} uses the same separation but differs in three respects. (i) \emph{Oracle.} Both analyses specify a priori approximation guarantees for their optimization subroutines. For $\mathcal{P}_t$, even linear minimization over the feasible set with known costs is Set-Cover-hard (Proposition~\ref{prop:epoch-hard}). Thus, a polynomial-time exact oracle cannot be assumed unless $\mathrm{P}=\mathrm{NP}$. We compare with the exact optimum $\boldsymbol{a}_t^\star$ and account for the additive optimization error. (ii) \emph{Certificate.} \textsc{JAS} returns a feasible schedule and a valid global lower bound, so $\eta_t\leq J_t^-(\boldsymbol{a}_t)-L_t^-$ holds at every epoch, including when a deadline interrupts the solver. (iii) \emph{Tolerance and work.} AESCB enforces its approximation tolerance by construction. \textsc{JAS} checks the optimization gap against the selected assignments' confidence widths in \eqref{eq:dual-stop}. For \textsc{RLS} under Assumption~\ref{ass:components}, Theorem~\ref{thm:compute-regret} bounds the work needed to pass this check when the tolerance lower bound is positive and no deadline is imposed.

\subsection{Proof of Corollary~\ref{cor:horizon-regret}}
\label{app:horizon-regret-proof}
The corollary follows from Theorem~\ref{thm:directional-delay} and the delay bound \eqref{eq:directional-delay-envelope}. The notation $\widetilde O$ suppresses logarithmic factors in $T$.

\begin{proof}[Proof of Corollary~\ref{cor:horizon-regret}]
On the confidence event of Theorem~\ref{thm:directional-delay}, satisfying \eqref{eq:dual-stop} at every nonempty epoch gives
\[
 R_T\leq(2+\alpha)\omega\overline\beta_T
       \sqrt{N_T(2\Lambda_T+H_T)}.
\]
The delay bound in Appendix~\ref{app:directional-proof} and the workload bound give
\[
 H_T\leq\frac{2MD}{\lambda_0}\Lambda_T,
 \qquad N_T\leq MT.
\]
Substituting both inequalities yields
\begin{equation}
 R_T\leq(2+\alpha)\omega\overline\beta_T
       \sqrt{2MT\left(1+\frac{MD}{\lambda_0}\right)\Lambda_T}.
 \label{eq:horizon-regret}
\end{equation}
For fixed ${D_\phi},M,\lambda_0$, $\Lambda_T\leq {D_\phi}\log(1+MT/(\lambda_0{D_\phi}))=O(\log(1+T))$. For the horizon rate, we also hold $D,\alpha,\omega,\sigma,S,\delta$ fixed, so
\[
 \overline\beta_T
 =\sigma\sqrt{\Lambda_T+2\log(1/\delta)}+\sqrt{\lambda_0}S
 =O(\sqrt{\log(1+T)}).
\]
Equation~\eqref{eq:horizon-regret} and $\overline\beta_T\leq\sigma\sqrt{\Lambda_T}+\sigma\sqrt{2\log(1/\delta)}+\sqrt{\lambda_0}S$, with $\Lambda_T\leq D_\phi\log(1+MT/(\lambda_0D_\phi))$, give the rate in \eqref{eq:horizon-rates}, so $R_T=\widetilde O(\sqrt T)$ with probability at least $1-\delta$. For $\lambda_0=\max\{1,M\}$,
\[
 R_T=\widetilde O\Bigl((2+\alpha)\,\omega\sqrt{1+D}\,
 \bigl(\sigma D_\phi\sqrt{MT}+SM\sqrt{D_\phi T}\bigr)\Bigr).
\]
With $k=M$ items per round and $d=D_\phi$, the two terms mirror the $\widetilde O(d\sqrt{kT})$ and $\widetilde O(Sk\sqrt{dT})$ cases of C$^2$UCB \citep{takemura2021contextual}, and, for $D=0$, the first improves the task-count dependence of CombLinUCB's $\widetilde O(Kd\sqrt{n})$ bound \citep{wen2015comblin} by a factor $\sqrt M$, with $K=M$ and $n=T$. Quality-aware stopping adds the factor $2+\alpha$, and the realized term $H_T$ in Theorem~\ref{thm:directional-delay} can replace the worst-case delay factor.
\end{proof}

\section{Computational Efficiency}\label{app:computation}

\subsection{Proof of Theorem~\ref{thm:compute-regret}}
\label{app:compute-proof}

Let $\nu_t$, $N^{\mathrm{r}}_t$ and $G^{\mathrm{g}}_t$ denote the component count, retained-state count and state-generation work. The initial gap is $B_t=U_{0,t}-L_{0,t}$, where $U_{0,t}$ is the incumbent's optimistic cost and $L_{0,t}$ sums the component minima. The \emph{tolerance lower bound} is
\begin{equation}
 \tau_t=\alpha\omega\beta_t
       \sum_{r\in\cR_t}\min_{g:(r,g)\in\cE_t}
                         \|\boldsymbol{\phi}_{rg,t}\|_{\boldsymbol{V}_t^{-1}}.
 \label{eq:stopping-floor}
\end{equation}
This quantity lower-bounds the stopping threshold of every feasible schedule. After a failed initial test, \textsc{RLS} starts doubling refinement at $\kappa=\nu_t$. Write $\kappa_t$ for the chosen or last completed resolution. Within a call, optimistic costs and confidence widths stay fixed.

\begin{proof}[Proof of Theorem~\ref{thm:compute-regret}]
\par\smallskip\noindent\textbf{Completed solves bound the optimization error.}\quad
Fix a nonempty epoch $t$ and omit its subscript. Lemma~\ref{lem:subset-original} shows that the component representation preserves the optimistic optimum $m^-$. Thus the component minima and input incumbent satisfy
\[
 L_0\leq m^-\leq U_0,\qquad B=U_0-L_0\geq0.
\]
If $B=0$, the input incumbent and $L_0$ give zero optimization error and satisfy \eqref{eq:dual-stop}, even at zero tolerance.

For $B>0$, recurrence~\eqref{eq:rounded-component-dp} keeps the minimum exact reservation for each rounded excess-cost total at resolution $\kappa$. Backtracking therefore gives a feasible decision $\boldsymbol{a}_\kappa$. The lower bound in \eqref{eq:rounded-lower-bound} and Appendix~\ref{app:rounding-proof} give
\[
 L_\kappa\leq m^-\leq J^-(\boldsymbol{a}_\kappa),\qquad
 J^-(\boldsymbol{a}_\kappa)-L_\kappa\leq\frac{\nu B}{\kappa}.
\]
Rounding changes only excess costs. Keeping reservations exact preserves feasibility in both the capacity test and the reconstructed schedule.

\textsc{JAS} retains the best feasible schedule and largest valid global lower bound found so far. Hence $J^-(\boldsymbol{a}^0)\leq J^-(\boldsymbol{a}_\kappa)$ and $L^-\geq L_\kappa$, giving
\[
 0\leq J^-(\boldsymbol{a}^0)-m^-
 \leq J^-(\boldsymbol{a}^0)-L^-
 \leq J^-(\boldsymbol{a}_\kappa)-L_\kappa
 \leq\frac{\nu B}{\kappa}.
\]
The retained schedule and bound may come from different completed solves. Restoring epoch subscripts gives \eqref{eq:component-main-gap}.

\par\smallskip\noindent\textbf{DP work scales with the retained states and resolution.}\quad
Generate states once in $O(G^{\mathrm{g}}_t)$ work, storing each state's minimum-cost assignments, cost, exact reservation and selected confidence-width sum. Recurrence~\eqref{eq:rounded-component-dp} considers at most $\kappa_t+1$ totals per retained state, requiring $O(N^{\mathrm{r}}_t(\kappa_t+1))$ work. Backtracking selects one state per component, and their stored width sums give the stopping test in $O(\nu_t)$ work. Expanding the final assignments is covered by $O(G^{\mathrm{g}}_t)$. Since each component retains at least one state, $\nu_t\leq N^{\mathrm{r}}_t$, so these operations fit the same bound. The per-epoch computational complexity is therefore $O(G^{\mathrm{g}}_t+N^{\mathrm{r}}_t(\kappa_t+1))$. Summing completed solves across epochs gives total computational complexity
\[
 O\!\left(\sum_{t\leq T}
       [G^{\mathrm{g}}_t+N^{\mathrm{r}}_t(\kappa_t+1)]\right).
\]
Here $\kappa_t$ is the chosen or last completed resolution. Under doubling, the geometric-sum argument below includes all earlier completed solves in the same bound, with states generated only once.

\par\smallskip\noindent\textbf{A positive tolerance lower bound ensures finite refinement.}\quad
For a feasible decision $\boldsymbol{a}$, define its selected confidence-width sum as
\[
 w_t(\boldsymbol{a})=\sum_{(r,g):x_{rg}=1}
                    \|\boldsymbol{\phi}_{rg,t}\|_{\boldsymbol{V}_t^{-1}}.
\]
Each task uses exactly one compatible option, so the tolerance lower bound in \eqref{eq:stopping-floor} satisfies
\[
 \alpha\omega\beta_t w_t(\boldsymbol{a})
 \geq \alpha\omega\beta_t
       \sum_{r\in\cR_t}\min_{g:(r,g)\in\cE_t}
                     \|\boldsymbol{\phi}_{rg,t}\|_{\boldsymbol{V}_t^{-1}}
 =\tau_t
\]
for every feasible incumbent throughout refinement. The context matrix, quality estimates, component states, $U_0$, $L_0$ and $B_t$ remain fixed during the call.

If the input incumbent and $L_0$ satisfy \eqref{eq:dual-stop}, no rounded solve is needed. Otherwise, for $\tau_t>0$,
\[
 B_t>
 \alpha\omega\beta_t w_t(\boldsymbol{a}_t^0)\geq\tau_t.
\]
At stage $j=0,1,\ldots$, use $\kappa_j=\nu_t2^j$. The completed-solve bound gives
\[
 J_t^-(\boldsymbol{a}_t^0)-L_t^-
 \leq\frac{\nu_tB_t}{\kappa_j}
 =B_t2^{-j}.
\]
Thus the stopping condition holds no later than stage
\[
 j_* =\left\lceil\log_2\frac{B_t}{\tau_t}\right\rceil,
 \qquad
 B_t2^{-j_*}\leq\tau_t
 \leq\alpha\omega\beta_t w_t(\boldsymbol{a}_t^0).
\]
The incumbent's selected uncertainty may allow earlier stopping. The lower bound $\tau_t$ controls the worst-case number of stages.

\par\smallskip\noindent\textbf{Doubling adds only a constant factor to the last solve's work.}\quad
Since $\kappa_j\geq1$, the sum of the resolution ranges obeys
\begin{align*}
 \sum_{j=0}^{j_*}(\kappa_j+1)
 &\leq 2\sum_{j=0}^{j_*}\kappa_j
   =2\nu_t(2^{j_*+1}-1)
   <4\kappa_{j_*},\\
 \kappa_{j_*}&=\nu_t2^{j_*}
   <2\nu_t\frac{B_t}{\tau_t}.
\end{align*}
Each stage's DP work covers its $O(\nu_t)$ stopping test. Including state generation, refinement after a failed initial test costs $O(G^{\mathrm{g}}_t+N^{\mathrm{r}}_t\nu_tB_t/\tau_t)$. Including successful initial tests and $B_t=0$ gives
\[
 O\!\left(G^{\mathrm{g}}_t+N^{\mathrm{r}}_t\nu_t
                  \left(1+\frac{B_t}{\tau_t}\right)\right),
\]
which proves \eqref{eq:statistical-work}. Empty task epochs require no component solve and contribute zero regret.
\end{proof}

\paragraph{Implications for decision quality.}
If a resolution $\kappa_t$ completes at every nonempty epoch, then on the confidence event of Theorem~\ref{thm:directional-delay}, substituting \eqref{eq:component-main-gap} into \eqref{eq:directional-regret} gives
\begin{equation}
 R_T\leq\sum_{t\leq T}\frac{\nu_tB_t}{\kappa_t}
       +2\omega\overline\beta_T\Xi_T.
 \label{eq:work-regret}
\end{equation}

\paragraph{Deadlines and zero tolerance.}
An interrupted solve keeps the retained feasible schedule and valid global lower bound, but the unfinished resolution gives no new accuracy guarantee. The finite work bound requires $\tau_t>0$. For $\tau_t=0$, stopping can still succeed when the incumbent and lower bound close the gap. Integer reservations also permit \eqref{eq:exact-component-dp}. These bounds count scheduling arithmetic operations, with optimistic costs and confidence widths given. They exclude quality estimation, model execution and feedback collection. A fixed tolerance $\epsilon>0$ uses $\kappa=\max\{1,\lceil\nu B/\epsilon\rceil\}$ and $O(G^{\mathrm{g}}+N^{\mathrm{r}}(1+\nu B/\epsilon))$ work. Quality-aware stopping adapts this accuracy to the remaining quality uncertainty.

\paragraph{Workload and precision.}
A smaller initial gap $B$ or a larger tolerance lower bound $\tau$ reduces the complexity bound. Centering makes the bound invariant to common cost offsets within each component. For one nonempty epoch, omit $t$ and let ${M_{\mathrm{c}}}=\max_i|\cR_i|$. Appendix~\ref{app:component-regime-proof} gives $N^{\mathrm{r}}\leq\nu(2^{K_{\mathrm{c}}}-1)$ and $G^{\mathrm{g}}=O(\nu2^{K_{\mathrm{c}}}{K_{\mathrm{c}}}({M_{\mathrm{c}}}+1))$ for result-subset enumeration, so \eqref{eq:statistical-work} becomes
\[
 O\!\left(2^{K_{\mathrm{c}}}\nu\left[{K_{\mathrm{c}}}({M_{\mathrm{c}}}+1)+\nu\left(1+\frac{B}{\tau}\right)\right]\right).
\]
At a fixed resolution $\kappa$, Corollary~\ref{cor:regime} gives work linear in $\nu$ for bounded ${K_{\mathrm{c}}},{M_{\mathrm{c}}}$. Appendix~\ref{app:subset-proof} describes the different generation cost of the assignment catalogues used in timing experiments.

\paragraph{Comparison with exact scheduling.}
\textsc{RLS} indexes rounded costs, while Exact DP indexes integer capacity. Under Assumption~\ref{ass:components}, completed solves have the following per-epoch complexity bounds.
\begin{center}
\small
\begin{tabular}{@{}ll@{}}
\toprule
Method & Arithmetic work or candidate count \\
\midrule
Complete result-selection enumeration & $2^{\sum_i|\cG_i|}$ candidates \\
Exact integer-capacity DP & $O(G^{\mathrm{g}}+N^{\mathrm{r}}({c^{\max}}+1))$ \\
\textsc{RLS}, to \eqref{eq:dual-stop} ($\tau>0$) & $O(G^{\mathrm{g}}+N^{\mathrm{r}}\nu(1+B/\tau))$ \\
\bottomrule
\end{tabular}
\end{center}
Exact DP uses the recurrence in Appendix~\ref{app:exact-capacity}. After a failed initial test with $\tau>0$, $\kappa=\lceil\nu B/\tau\rceil$ suffices, and doubling reaches the stopping condition at a resolution below $2\nu B/\tau$. Thus \textsc{RLS} can reduce DP work when ${K_{\mathrm{c}}}$ is small and $\nu B/\tau\ll {c^{\max}}$, unless state generation dominates. Small capacities or tight tolerances can favor Exact DP.

\section{Additional Experimental Evaluation}\label{app:online-v9}

\subsection{Experimental Settings}\label{app:experimental-protocol}
\paragraph{Data version and splits.}
Workloads use the published releases of QuALITY, QASPER, MultiRC and RACE cited in Section~\ref{sec:experiments}. The source and target models are Qwen3.5-4B and Qwen3.5-9B, respectively. Each dataset is divided into disjoint development and confirmation pools in a $35$:$80$ ratio. The default workload draws four documents per epoch and schedules their questions as tasks.

\paragraph{Routes.}
We use four routes: native, source-short, source-long and target-self. Native execution uses the $9$B target to answer from the original document. Source-short and source-long use the $4$B source to prepare concise and detailed summaries, respectively, from which the target answers. Target-self uses the $9$B model for both summary preparation and answering. Summary length is five percent of the source document or $16$ tokens, whichever is larger. SMU uses native execution and target-self. The five primary methods use all four routes.

\paragraph{Context construction.}
The context vector uses a scalar $\xi$ from document length and question--document overlap, with two coordinates for each route. For character length $L_{\mathrm{doc}}$ and overlap $o\in[0,1]$, where $o$ is the fraction of question tokens that appear in the document, let $\tilde d$ be the min--max normalization of $\log(1+L_{\mathrm{doc}})$ in the dataset pool, and set
\begin{equation}
 \xi=\tfrac12\min\bigl\{1,\max\{0,\tilde d\}\bigr\}+\tfrac12(1-o).
\end{equation}
For route $k$, the context has $\boldsymbol{\psi}=(1-\xi,\xi)$ in its two coordinates and zeros in the other route blocks. Thus ${D_\phi}=8$ and $\|\boldsymbol{\phi}_{rg,t}\|_2\leq 1$.

\paragraph{Quality feedback and cost calibration.}
Quality losses are computed from the source and target LLM answers: $0/1$ error on QuALITY and RACE, $1-\mathrm{F1}$ on QASPER, and per-question F1 loss on MultiRC. The primary comparison uses delay $D=0$. Prefill and decode times per token are obtained from one calibration of the two models. They determine the modeled costs of native answers, summary preparation and answers using a summary.

\paragraph{Quality-aware cost and regret.}
We construct a semi-synthetic evaluation using the task contexts, computation costs, and resource constraints described above. For each route $k$, we fit $\boldsymbol{\theta}^{\mathrm{sim}}_k \in [0,1]^2$ by constrained least squares on the development-pool LLM losses and freeze it before confirmation experiments. The simulator defines $\mu_{rg,t} =\boldsymbol{\phi}_{rg,t}^{\top} \boldsymbol{\theta}^{\mathrm{sim}}\in[0,1]$. For every executed assignment, it draws a fresh observation $\ell^{\mathrm{o}}_{rg,t}\sim\operatorname{Bernoulli}(\mu_{rg,t})$, independently across executions conditional on the contexts and fixed simulator parameters. Only selected observations are revealed to the learner, with delays independent of unrevealed outcomes.

The evaluator uses $\mu_{rg,t}$ in \eqref{eq:objective} and computes the exact per-epoch optimum over the same feasible set. Regret is the cumulative difference between the chosen schedule's mean cost and this optimum, as in \eqref{eq:online-problem}; the per-task average divides this quantity by $N_T=\sum_{t=1}^{T}|\mathcal{R}_t|$. The simulator parameters and means are unavailable to the learner. This construction satisfies the linear quality model by design; it does not assume that actual LLM losses are exactly linear in the chosen features.

\subsection{Computational Efficiency}\label{app:computation-experiments}
\subsubsection{Stopping Rule and Solver Diagnostics}\label{app:solver-diagnostics}
Table~\ref{tab:v9-ablation} reports ablations of individual \textsc{GLS} mechanisms through $400$ epochs. 

\begin{table}[htbp]\centering\footnotesize
\setlength{\tabcolsep}{2.4pt}
\caption{\textsc{GLS} mechanism ablations.}
\label{tab:v9-ablation}
\begin{tabular}{@{}lcccccc@{}}\toprule
 & QARS & \shortstack{No\\screening} & \shortstack{No collective\\moves} & \shortstack{Fixed\\tolerance} & \shortstack{Zero\\tolerance} & \shortstack{No greedy\\initialization}\\\midrule
Regret & 0.785 & 0.845 & 0.808 & 0.896 & 0.787 & 0.875\\
Time (ms) & 1.770 & 1.786 & 2.551 & 1.469 & 4.291 & 1.337\\
\bottomrule\end{tabular}
\end{table}

Screening removes an average of $0.693$ results per epoch. Collective moves reduce MILP calls from $0.199$ to $0.094$ per epoch. On the primary trajectories, the mean quality-loss interval width is $0.278$, and $98.7\%$ of decisions satisfy the stopping rule \eqref{eq:dual-stop}.

\subsubsection{Resources and Scheduling Deadlines}\label{app:resource-sensitivity}
Figure~\ref{fig:online-v9-budgets} reports regret under capacity, workload-size, and scheduling-deadline sweeps. Capacities are $2,3,4,5,6$ and $8$; the workload-size sweep varies document count and capacity jointly. Deadlines are $1,3,10,30,100$\,ms. Each point uses $400$ epochs, with task streams paired across methods. \alg{} has the lowest mean regret in several settings. Figure~\ref{fig:scheduling-time-deadline} reports scheduling time for the same five methods and deadlines, with $400$ epochs per dataset.

\begin{figure}[htbp]\centering
\includegraphics[width=0.6\linewidth]{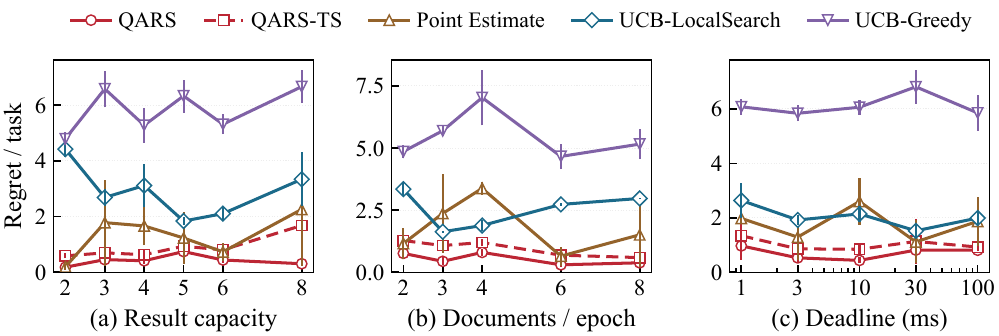}
\vspace{-4pt}
\caption{Regret across resource and deadline settings for the five methods in
Table~\ref{tab:v9-main-primary}.}
\label{fig:online-v9-budgets}
\end{figure}
\vspace{-4pt}

\subsubsection{Component Dispatch and Scaling}\label{app:component-experiments}
Figure~\ref{fig:online-v16-scale} compares Bounded (\textsc{RLS}) and General (\textsc{GLS}) with Exact DP and MILP as the component count $\nu$ grows. All methods use fixed quality estimates and assignment catalogues. The comparison uses integer-component instances and a $50$\,ms scheduling budget, with ten instances at each $\nu\in\{24,32,40,48,64\}$. Reported times include catalogue construction and any budget overrun (Appendix~\ref{app:exact-capacity}); MILP times also include model construction in SciPy/HiGHS. At $\nu=24$, the times are $1.70$\,ms for Bounded, $3.49$\,ms for General, $5.38$\,ms for Exact DP and $9.91$\,ms for MILP.

\subsubsection{Subset versus Assignment Catalogues}\label{app:subset-assignment-exp}
Figure~\ref{fig:subset-assignment} compares subset and assignment catalogue generation (Corollary~\ref{cor:regime}). The scaling experiment above fixes ${K_{\mathrm{c}}}={M_{\mathrm{c}}}=4$. Here $\nu=24$ and quality estimates are fixed. We vary ${M_{\mathrm{c}}}$ with ${K_{\mathrm{c}}}=4$, then ${K_{\mathrm{c}}}$ with ${M_{\mathrm{c}}}=4$. Times are medians over ten instances.

\begin{figure}[ht]\centering
\includegraphics[width=0.6\linewidth]{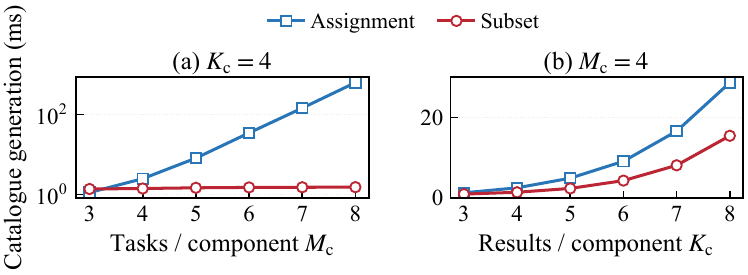}
\vspace{-4pt}
\caption{Catalogue generation time. (a)~${K_{\mathrm{c}}}=4$, log scale.
(b)~${M_{\mathrm{c}}}=4$, linear scale.}
\label{fig:subset-assignment}
\end{figure}

At ${M_{\mathrm{c}}}=8$, assignment generation takes $629$\,ms and subset generation takes $1.53$\,ms. At ${K_{\mathrm{c}}}=8$, the times are $28.6$\,ms and $15.4$\,ms. Both catalogues yield the same objective on all $120$ instances. The subsequent dynamic program takes about $0.2$\,ms and reconstruction about $0.01$\,ms.
\subsection{Quality Weight}\label{app:omega-sweep}
Figure~\ref{fig:online-v9-omega} multiplies each dataset's default quality weight $\omega_0$ by $0.25$, $0.5$, $1$, $2$ and $4$. The five methods in Table~\ref{tab:v9-main-primary} are evaluated at capacity $5$ for $400$ epochs. At $\omega=\omega_0$, \alg{} has the lowest mean regret ($0.863$). The factor $\omega$ in Theorem~\ref{thm:directional-delay} reflects the weighting of quality loss in the objective.
\begin{figure}[htbp]\centering
\includegraphics[width=0.8\linewidth]{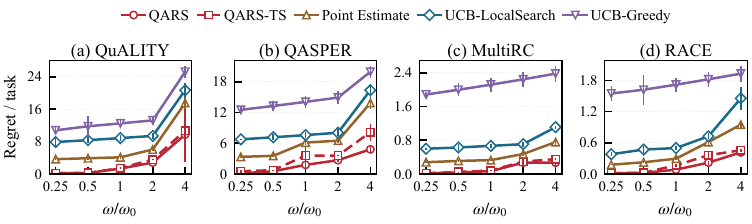}
\vspace{-10pt}
\caption{Regret versus quality-weight multiplier. $\omega_0$ is the dataset-specific default.}
\label{fig:online-v9-omega}
\end{figure}
\vspace{-8pt}

\end{document}